\documentclass[reqno]{my_statsoc}

\usepackage[a4paper]{geometry} % required for statsoc+pdflatex

\usepackage{amsmath,amssymb,amsthm}
\usepackage[colorlinks,citecolor=blue,urlcolor=blue,breaklinks]{hyperref}
\usepackage{natbib}
\usepackage{pdfsync}
\usepackage{float}
\usepackage{graphicx}

\bibliographystyle{plainnat}

\usepackage{footmisc}

 %(or 1.3)

%
% use this package if hyperref and natbib is used:
\RequirePackage{hypernat}

\usepackage{amsmath,examplep,wasysym}

\usepackage{amsfonts,mathrsfs}
\usepackage{dsfont}

\DeclareMathOperator*{\argmin}{arg\,min}

\newcommand{\RR}{\mathbb{R}}

\newcommand{\NN}{\mathbb{N}}

\newcommand{\PP}{\mathbb{P}}

\newcommand{\eps}{\epsilon}

\newcommand{\bzeta}{\boldsymbol{\zeta}}
\newcommand{\btheta}{\boldsymbol{\theta}}
\newcommand{\bvartheta}{\boldsymbol{\vartheta}}
\newcommand{\bthetaML}{\boldsymbol{\theta}^{{\rm ML}}}
\newcommand{\bthetaB}{\boldsymbol{\theta}^{{\rm B}}}

\newcommand{\bxi}{\boldsymbol{\xi}}
\newcommand{\bfeta}{{\boldsymbol{\eta}}}
\newcommand{\bfX}{\mathbf X}
\newcommand{\bfA}{\mathbf A}
\newcommand{\bfB}{\mathbf B}
\newcommand{\bfE}{\mathbf E}
\newcommand{\bfH}{\mathbf H}

\newcommand{\bfP}{\mathbf P}
\newcommand{\bfM}{\mathbf M}
\newcommand{\bfQ}{\mathbf Q}
\newcommand{\bfI}{\mathbf I}

\newcommand{\bfSigma}{\boldsymbol\Sigma}

\newcommand{\ba}{\mathbf a}
\newcommand{\boldb}{\boldsymbol b}

\newcommand{\bu}{\boldsymbol u}
\newcommand{\bv}{\boldsymbol v}

\newcommand{\bW}{\boldsymbol W}

\newcommand{\bD}{\boldsymbol D}
\newcommand{\bL}{\boldsymbol L}
\newcommand{\bP}{\bfP}
\newcommand{\bPLt}{\bfP^t_{\!\!\bL}}

\newcommand{\bPDh}{\bfP^h_{\!\!\bD}}
\newcommand{\bX}{\boldsymbol X}
\newcommand{\bx}{\boldsymbol x}
\newcommand{\bY}{\boldsymbol Y}
\newcommand{\bZ}{\boldsymbol Z}
\newcommand{\by}{\boldsymbol y}

\newcommand{\ind}{\mathds 1}
\newcommand{\mean}{\mathop{\rm mean}}

\newtheorem{theorem}{Theorem}
\newtheorem{proposition}{Proposition}
\newtheorem{lemma}{Lemma}
\newtheorem{corollary}{Corollary}

\newtheorem{remark}{Remark}

\def\toprule{\hline}
\def\midrule{\hline}
\def\bottomrule{\hline}

%%%%%%%%%%%%%%%%%%%%%%%%%%%%%%%%%%%%%%%%%%%%%%%%%%%%%%%%%%%%%%%%%%%%%%%%%%%%%%%%
%%%%%%%%%%%%%%%%%%%%%%%%%%%%%%%%%%%%%%%%%%%%%%%%%%%%%%%%%%%%%%%%%%%%%%%%%%%%%%%%
%%%%%%%%%%%%%%%%%%%%%%%%%%%%%%%%%%%%%%%%%%%%%%%%%%%%%%%%%%%%%%%%%%%%%%%%%%%%%%%%
%%%%%%%%%%%%

\linespread{1.05}
\parindent=0pt
\parskip=4pt
\setlength\headsep{1cm}
\setlength\textheight{23cm}
\setlength\textwidth{15.5cm}
\setlength\oddsidemargin{10pt}
\setlength\evensidemargin{10pt}

\title[Sampling from smooth and log-concave densities]{Theoretical guarantees for approximate sampling from\\ smooth and log-concave densities}
\author[A. S. Dalalyan]{Arnak S. Dalalyan}
\address{ENSAE ParisTech - CREST,\\
3, Avenue Pierre Larousse,\\
92240 Malakoff, France.}
\email{arnak.dalalyan@ensae.fr}

\begin{document}

\begin{abstract}
Sampling from various kinds of distributions is an issue of pa\-ra\-mount  importance in statistics
since it is often the key ingredient for constructing estimators, test procedures or
confidence intervals. In many situations, the exact sampling from a given distribution
is impossible or computationally expensive and, therefore, one needs to resort to approximate
sampling strategies. However, there is no well-developed theory providing meaningful
nonasymptotic guarantees for the approximate sampling procedures, especially in the
high-dimensional problems. This paper makes some progress in this direction
by considering the problem of sampling from  a distribution having a smooth and log-concave
density defined on $\RR^p$, for some integer $p>0$. We establish nonasymptotic bounds for the
error of approximating the target distribution by the one obtained by the Langevin Monte Carlo
method and its variants. We illustrate the effectiveness of the established guarantees with 
various experiments. Underlying our analysis are insights from the theory of continuous-time 
diffusion processes, which may be of interest beyond the framework of log-concave densities 
considered in the present work.
\end{abstract}

\vspace{-20pt}

\keywords{Markov Chain Monte Carlo,
Approximate sampling,
Rates of convergence,
Langevin algorithm}

%\maketitle
%\dominitoc

%\setcounter{tocdepth}{1}
%\tableofcontents

\section{Introduction}

Let $p$ be a positive integer and $f:\RR^p\to\RR$ be a measurable function such that the integral $\int_{\RR^p} \exp\{-f(\btheta)\}\,d\btheta$
is finite. 
If we think of $f$ as the negative log-likelihood or the negative log-posterior of a statistical model, then the maximum likelihood
and the Bayesian estimators, which are perhaps the most popular in statistics, are respectively defined as
$$
\bthetaML \in \argmin_{\btheta\in\RR^p} f(\btheta);\qquad
\bthetaB = \frac{1}{\int_{\RR^p} e^{-f(\bu)}\,d\bu}\int_{\RR^p} \btheta e^{-f(\btheta)}\,d\btheta.
$$
These estimators are rarely available in closed-form. Therefore, optimisation techniques are used for computing the maximum-likelihood
estimator while the computation of the Bayes estimator often requires sampling from a density proportional to $e^{-f(\btheta)}$.
In most situations, the exact computation of these two estimators is impossible and one has to resort to approximations provided by
iterative algorithms. There is a vast variety of such algorithms for solving both tasks, see for example
\citep{BoydBook} for optimisation and \citep{Atchade2011} for approximate sampling. However, a striking fact is that the convergence
properties of optimisation algorithms are much better understood than those of the approximate sampling algorithms. The goal of
the present work is to partially fill this gap by establishing easy-to-apply theoretical guarantees for some
approximate sampling algorithms.

To be more precise, let us consider the case of a strongly convex function $f$ having a Lipschitz continuous gradient. That is, there
exist two positive constants $m$ and $M$ such that
\begin{equation} \label{convex1}
\begin{cases}
f(\btheta)-f(\bar\btheta)-\nabla f(\bar\btheta)^\top (\btheta-\bar\btheta)
\ge \frac{m}2\|\btheta-\bar\btheta\|_2^2,\\
\|\nabla f(\btheta)-\nabla f(\bar\btheta)\|_2 \le M \|\btheta-\bar\btheta\|_2,
\end{cases}
\qquad \forall \btheta,\bar\btheta\in\RR^p,
\end{equation}
where $\nabla f$ stands for the gradient of $f$ and $\|\cdot\|_2$ is the Euclidean norm. There is a simple result characterising the
convergence of the  well-known gradient descent algorithm under the assumption (\ref{convex1}).

\begin{theorem}[Eq. (9.18) in \citep{BoydBook}]\label{th:1}
If $f:\RR^p\to\RR$ is continuously differentiable and fulfils (\ref{convex1}), then the gradient descent algorithm defined recursively by
\begin{align}\label{algoGD}
\btheta^{(k+1)} = \btheta^{(k)} - (2M)^{-1} \nabla f(\btheta^{(k)});\qquad k=0,1,2,\ldots
\end{align}
satisfies
\begin{equation}\label{eq:th1}
\|\btheta^{(k)}-\bthetaML\|_2^2 \le \frac{2\big(f(\btheta^{(0)})-f(\bthetaML)\big)}m \Big(1-\frac{m}{2M}\Big)^{k},\qquad \forall k\in\NN.
\end{equation}
\end{theorem}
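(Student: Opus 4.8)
The plan is to track the functional suboptimality $\Delta_k := f(\btheta^{(k)})-f(\bthetaML)$ along the iterates, show it contracts geometrically, and then convert this into a bound on the squared distance to $\bthetaML$ via strong convexity. The whole argument rests on two elementary consequences of the two lines in (\ref{convex1}), applied in turn to the gradient step.

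First I would turn the Lipschitz bound (the second line of (\ref{convex1})) into a one-step \emph{descent lemma}. Integrating $\nabla f$ along the segment joining two points and using Lipschitz continuity of the gradient yields the quadratic majorant $f(\by)\le f(\bx)+\nabla f(\bx)^\top(\by-\bx)+\tfrac{M}{2}\|\by-\bx\|_2^2$ for all $\bx,\by\in\RR^p$. Applying this with $\bx=\btheta^{(k)}$ and $\by=\btheta^{(k+1)}=\btheta^{(k)}-(2M)^{-1}\nabla f(\btheta^{(k)})$, and noting that the step size $(2M)^{-1}$ is at most $M^{-1}$, I expect the sufficient-decrease estimate $\Delta_{k+1}\le\Delta_k-(4M)^{-1}\|\nabla f(\btheta^{(k)})\|_2^2$ (the crude constant $(4M)^{-1}$ is what makes the final factor come out cleanly).

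The second ingredient is a Polyak--{\L}ojasiewicz-type inequality drawn from the strong-convexity line of (\ref{convex1}). Minimising the minorant $f(\by)\ge f(\btheta^{(k)})+\nabla f(\btheta^{(k)})^\top(\by-\btheta^{(k)})+\tfrac{m}{2}\|\by-\btheta^{(k)}\|_2^2$ over $\by\in\RR^p$ and evaluating the left side at the global minimiser gives $f(\bthetaML)\ge f(\btheta^{(k)})-(2m)^{-1}\|\nabla f(\btheta^{(k)})\|_2^2$, that is $\|\nabla f(\btheta^{(k)})\|_2^2\ge 2m\,\Delta_k$. Inserting this into the descent estimate produces the geometric recursion $\Delta_{k+1}\le\big(1-\tfrac{m}{2M}\big)\Delta_k$, so $\Delta_k\le\big(1-\tfrac{m}{2M}\big)^k\Delta_0$ by induction; since strong convexity forces $m\le M$, the contraction factor lies in $[\tfrac12,1)$.

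Finally I would pass from the functional gap back to the distance: applying the strong-convexity inequality once more with $\bar\btheta=\bthetaML$ and using $\nabla f(\bthetaML)=0$ gives $\tfrac{m}{2}\|\btheta^{(k)}-\bthetaML\|_2^2\le\Delta_k$. Chaining this with the geometric decay and $\Delta_0=f(\btheta^{(0)})-f(\bthetaML)$ delivers exactly (\ref{eq:th1}). The one genuinely delicate point is the {\L}ojasiewicz step, where one must recognise that the unconstrained minimiser of the quadratic strong-convexity minorant controls the true optimum $f(\bthetaML)$; everything else is routine descent-lemma algebra and a telescoping induction.
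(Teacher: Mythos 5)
Your proof is correct and follows essentially the same route as the paper's source for this result (Eq.\ (9.18) in \citep{BoydBook}): the descent lemma from the Lipschitz gradient, the Polyak--{\L}ojasiewicz bound $\|\nabla f(\btheta^{(k)})\|_2^2\ge 2m\,\Delta_k$ from strong convexity, the resulting geometric recursion, and the final conversion $\|\btheta^{(k)}-\bthetaML\|_2^2\le \frac{2}{m}\Delta_k$. This is also exactly the argument the paper itself adapts in its proof of Proposition~\ref{prop:1}, so nothing further is needed.
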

This theorem implies that the convergence of the gradient descent is exponential in $k$. More precisely, it
results from Eq.\ (\ref{eq:th1}) that in order to achieve an approximation error upper bounded by $\epsilon>0$ 
in the Euclidean norm it suffices to perform
\begin{equation}\label{keps}
k_\epsilon = \frac{\log \big\{2m^{-1}\big(f(\btheta^{(0)})-f(\bthetaML)\big)\big\}+2\log(1/\epsilon)}{\log \frac{2M}{2M-m}}
\end{equation}
evaluations of the gradient of $f$. An important feature of this result is the logarithmic dependence of $k_\epsilon$ on
$\epsilon$ but also its independence of the dimension $p$. Note also that even though the right-hand side of (\ref{keps}) is a somewhat
conservative bound on the number of iterations, all the quantities involved in that expression are easily computable and lead to a
simple stopping rule for the iterative algorithm.

The situation for approximate computation of $\bthetaB$ or for approximate sampling from the density proportional to $e^{-f(\btheta)}$
is much more contrasted. While there exist almost as many algorithms for performing these tasks as for the optimisation, the convergence
properties of most of them are studied only empirically and, therefore, provide little theoretically grounded guidance for the
choice of different tuning parameters or of the stopping rule. Furthermore, it is not clear how the rate of convergence of these algorithms
scales with the growing dimension. While it is intuitively understandable that the problem of sampling from a distribution is more difficult
than that of maximising its density, this does not necessarily justifies the huge gap that exists between the precision of theoretical 
guarantees available for the solutions of these two problems. This gap is even more surprising in light of the numerous similarities 
between the optimisation and approximate sampling algorithms.

Let us describe a particular example of approximate sampling algorithm, the Langevin Monte Carlo (LMC),
that will be studied throughout this work.
Its definition is similar to the gradient descent algorithm for optimisation but involves an additional
step of random perturbation. Starting from an initial point $\bvartheta^{(0)}\in\RR^p$ that may be
deterministic or random, the subsequent steps of the algorithm are defined by
the update rule
\begin{align}\label{algoLMC}
\bvartheta^{(k+1,h)} = \bvartheta^{(k,h)} - h \nabla f(\bvartheta^{(k,h)})+ \sqrt{2h}\;\bxi^{(k+1)};\qquad k=0,1,2,\ldots
\end{align}
where $h>0$ is a tuning parameter, often referred to as the step-size, and $\bxi^{(1)},\ldots,\bxi^{(k)},\ldots$
is a sequence of independent centered Gaussian vectors with
covariance matrix equal to identity and independent of $\bvartheta^{(0)}$. It is well known that under some assumptions on $f$, when 
$h$ is small and $k$ is large (so that the product $kh$ is large), the distribution of $\bvartheta^{(k,h)}$ is close in total 
variation to the distribution with density proportional to $e^{-f(\btheta)}$, hereafter referred to as the target distribution. 
The goal of the present work is to establish a  nonasymptotic upper bound, involving only explicit and computable quantities, 
on the total variation distance between the target distribution and its approximation by the distribution of $\bvartheta^{(k,h)}$. 
We will also analyse a variant of the LMC, termed LMCO, which makes use of the Hessian of $f$.

In order to give the reader a foretaste of the main contributions of the present work, we summarised in Table~\ref{tab:0}
some guarantees established and described in detail in the next sections. To keep things simple, we translated all the
nonasymptotic results into asymptotic ones for large dimension $p$ and small precision level $\epsilon$ (the $O^*$ notation
ignores the dependence on constant and logarithmic factors). The complexity of one iteration of the LMC indicated in the
table corresponds to the computation of the gradient and generation of a Gaussian $p$-vector, whereas the complexity
of one iteration of the LMCO is the cost of performing a singular values decomposition on the Hessian matrix of $f$, which
is of size $p\times p$.
%\newpage
\begin{table}
\caption{\label{tab:0} Summary of the main findings of this work. The first two columns provide the order of magnitude
of the number of iterates to perform in order to make the error of approximation smaller than $\epsilon$. The third column
contains the worst-case complexity of one iteration. Note that in many practical situations the real complexity might be much
smaller than the worst-case one.}
%\caption{\label{tab:0} Summary of the main findings}
\centering
\fbox{%
\begin{tabular}{l|c|c|c}
\toprule
					& number of iterates & number of iterates & complexity of \\
					& Gaussian start     & warm start         & one iteration\\					
					\midrule
LMC 			& $O^*(p^3\epsilon^{-2})$ & $O^*(p\epsilon^{-2})$ & $O(p)$\\
          & Theorem~\ref{th:2}    & Section~\ref{sec:4.1}  &       \\
					\midrule
LMCO      & $O^*(p^{5/2}\epsilon^{-1})$ & $O^*(p\epsilon^{-1})$ & $O(p^3)$\\
          & Theorem~\ref{th:3}        & Section~\ref{Ozaki}  &       \\
\bottomrule
\end{tabular}
}
\end{table}

\subsection{Notation}
For any $p\in\NN$ we write $\mathscr B(\RR^p)$ for the $\sigma$-algebra of Borel sets of $\RR^p$. The Euclidean norm
of $\RR^p$ is denoted by $\|\cdot\|_2$ while $\|\nu\|_{\rm TV}$ stands for the total variation norm of a signed measure
$\nu$: $\|\nu\|_{\rm TV} = \sup_{A\in\mathscr B(\RR^p)} |\nu(A)|$.
For two probability measures $\nu$ and $\bar\nu$ defined on a space $\mathcal X$ and such that $\nu$ is absolutely continuous with respect to $\bar\nu$,
the Kullback-Leibler and $\chi^2$ divergences between $\nu$ and $\bar\nu$ are respectively defined by
$$
\text{KL}(\nu\|\bar\nu) = \int_{\mathcal X} \log\Big(\frac{d\nu}{d\bar\nu}(\bx)\Big)\,\nu(d\bx)
\quad\text{and} \quad
\chi^2(\nu\|\bar\nu) = \int_{\mathcal X} \Big(\frac{d\nu}{d\bar\nu}(\bx)-1\Big)^2\,\bar\nu(d\bx).
$$
All the probability densities on $\RR^p$  are with respect to the Lebesgue measure, unless otherwise specified.
We denote by $\pi$ the probability density function proportional to $e^{-f(\btheta)}$, by $\bfP_\pi$ the corresponding
probability distribution and by $\bfE_\pi$ the expectation with respect to $\bfP_\pi$. For a probability density $\nu$
and a Markov kernel $\bfQ$, we denote by $\nu\bfQ$ the probability distribution $\big\{(\nu\bfQ)(A) = \int_{\RR^p}
\nu(\bx)\;\bfQ(\bx,A)\;d\bx: A\in\mathscr B(\RR^p)\big\}$. We say that the density $\pi(\btheta)\propto e^{-f(\btheta)}$
is log-concave (resp.\ strongly log-concave) if the function $f$ satisfies the first inequality of (\ref{convex1})
with $m=0$ (resp.\ $m>0$). We refer the interested reader to \citep{Saumard} for a comprehensive survey on log-concave
densities.

\section{Background on the Langevin Monte Carlo algorithm}

The rationale behind the LMC algorithm (\ref{algoLMC}) is simple: the Markov chain $\{\bvartheta^{(k,h)}\}_{k\in\NN}$ is the Euler discretisation of a continuous-time
diffusion process $\{\bL_t :t\in\RR_+\}$, known as Langevin diffusion, that has $\pi$ as invariant density. The Langevin diffusion is defined by the stochastic
differential equation 
\begin{equation}\label{L-SDE}
d\bL_t = -\nabla f(\bL_t)\,dt + \sqrt{2} \; d\bW_t,\qquad t\ge 0,
\end{equation}
where $\{\bW_t:t\ge 0\}$ is a $p$-dimensional Brownian motion. When $f$ satisfies condition (\ref{convex1}), equation (\ref{L-SDE}) has a unique
strong solution which is a Markov process. In what follows, the transition kernel of this process is denoted by $\bPLt (\bx,\:\cdot\:)$,
that is $\bPLt (\bx,A) = \bfP(\bL_t\in A|\bL_0 = \bx)$ for all Borel sets $A\subset\RR^p$ and any initial condition $\bx\in\RR^p$. Furthermore,
assumption (\ref{convex1}) yields the spectral gap property of the semigroup $\{\bPLt:\,t\in\RR_+\}$, which in turn implies that the
process $\bL_t$ is geometrically ergodic in the following sense.
\begin{lemma}\label{lem:1}
Under assumption (\ref{convex1}), for any probability density $\nu$,
\begin{equation}\label{ineq:lem1}
\|\nu\bPLt-\pi\|_{\rm TV}\le \frac12\chi^2(\nu\|\pi)^{1/2} e^{-{t m}/{2}} ,\qquad \forall t\ge 0.
\end{equation}
\end{lemma}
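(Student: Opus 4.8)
The plan is to prove the total-variation decay in Lemma~\ref{lem:1} by passing through the $\chi^2$-divergence, exploiting that the Langevin semigroup $\{\bPLt\}$ is symmetric (reversible) with respect to $\pi$ and that assumption (\ref{convex1}) provides a Poincar\'e inequality with constant governed by the strong-convexity parameter $m$. The first step is to reduce the total-variation distance to the $\chi^2$-divergence at time $t$: by Cauchy--Schwarz one has $\|\mu-\pi\|_{\rm TV} = \frac12\int_{\RR^p}|\frac{d\mu}{d\pi}-1|\,\pi(d\bx) \le \frac12\chi^2(\mu\|\pi)^{1/2}$ for any probability measure $\mu$ absolutely continuous with respect to $\pi$. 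Applying this with $\mu = \nu\bPLt$ shows it suffices to establish the exponential decay $\chi^2(\nu\bPLt\|\pi)\le \chi^2(\nu\|\pi)\,e^{-tm}$.

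To control the $\chi^2$-divergence along the flow, I would set $g_t = \frac{d(\nu\bPLt)}{d\pi}$, so that $\chi^2(\nu\bPLt\|\pi) = \bfE_\pi[(g_t-1)^2] = \var_\pi(g_t)$ since $\bfE_\pi[g_t]=1$. The density $g_t$ evolves according to the (forward/backward) generator of the Langevin diffusion, namely $\partial_t g_t = \mathcal L g_t$, where $\mathcal L\phi = \Delta\phi - \nabla f\cdot\nabla\phi$ is the infinitesimal generator, self-adjoint in $L^2(\pi)$ with the integration-by-parts identity $\bfE_\pi[\phi\,\mathcal L\psi] = -\bfE_\pi[\nabla\phi\cdot\nabla\psi]$. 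Differentiating the variance in time and using this identity yields the energy dissipation
\begin{equation}\label{eq:dissip}
\frac{d}{dt}\var_\pi(g_t) = 2\,\bfE_\pi[(g_t-1)\,\mathcal L g_t] = -2\,\bfE_\pi\big[\|\nabla g_t\|_2^2\big].
\end{equation}

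The crux is then a Poincar\'e inequality: under the first inequality of (\ref{convex1}) with parameter $m>0$, the measure $\bfP_\pi$ is strongly log-concave, and the Bakry--\'Emery criterion gives the spectral gap bound $\var_\pi(\phi)\le m^{-1}\,\bfE_\pi[\|\nabla\phi\|_2^2]$ for all smooth $\phi$. Combining this with (\ref{eq:dissip}) produces the differential inequality $\frac{d}{dt}\var_\pi(g_t)\le -2m\,\var_\pi(g_t)$, and Gr\"onwall's lemma integrates this to $\var_\pi(g_t)\le \var_\pi(g_0)\,e^{-2mt}$; since $\var_\pi(g_0)=\chi^2(\nu\|\pi)$, taking square roots and inserting into the Cauchy--Schwarz bound of the first step delivers the claimed estimate with the exponent $tm/2$. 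The main obstacle, and the place requiring the most care, is the justification of the formal manipulations in (\ref{eq:dissip}): one must argue that $g_t$ is sufficiently regular and decays fast enough at infinity for the integration by parts to hold without boundary terms, and that differentiation under the integral sign is legitimate. Under (\ref{convex1}) the semigroup is strongly Feller with smooth densities, so these regularity requirements can be met by first establishing the inequality for a dense class of nice initial densities $\nu$ (for instance those with $g_0$ smooth and compactly supported, or Gaussian) and then extending to general $\nu$ by an approximation and monotone-convergence argument.
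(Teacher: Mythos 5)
Your proof is correct and in fact yields a slightly stronger bound than the lemma claims, but it runs in the opposite direction from the paper's argument, so a comparison is in order. The paper does not propagate the $\chi^2$-divergence of the law forward in time; it takes the $L^2(\pi)$ spectral-gap decay of the semigroup acting on \emph{functions} as a black box (citing Chen--Wang and Bakry--Gentil--Ledoux for $\int\big(\bfE[\varphi(\bL_t)\,|\,\bL_0=\bx]-\bfE_\pi[\varphi]\big)^2\pi(\bx)\,d\bx\le e^{-tm}\,\bfE_\pi[\varphi^2]$) and then performs a duality pairing: writing $\|\nu\bPLt-\pi\|_{\rm TV}=\sup_A\big|\int\big(\bPLt(\bx,A)-\pi(A)\big)\big(\nu(\bx)-\pi(\bx)\big)\,d\bx\big|$ by invariance of $\pi$, applying Cauchy--Schwarz against $\nu/\pi-1$, and taking $\varphi=\ind_A-\pi(A)$ with $\bfE_\pi[\varphi^2]=\pi(A)(1-\pi(A))\le \tfrac14$ --- which is where the paper's factor $\tfrac12$ originates, whereas yours comes from $\|\mu-\pi\|_{\rm TV}=\tfrac12\int|\frac{d\mu}{d\pi}-1|\,d\pi$. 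By reversibility the two routes are dual formulations of the same spectral-gap fact: your density $g_t$ satisfies $g_t-1=\bPLt(g_0-1)$ precisely because the semigroup is self-adjoint in $L^2(\pi)$. What your route buys is a self-contained derivation (dissipation identity plus Bakry--\'Emery Poincar\'e with constant $1/m$ plus Gr\"onwall) and a sharper exponent: you correctly obtain $\chi^2(\nu\bPLt\|\pi)\le e^{-2mt}\chi^2(\nu\|\pi)$, hence $e^{-mt}$ after the square root, which strictly improves the paper's $e^{-tm/2}$ (the paper's more conservative rate traces to using the strong-convexity inequality unsymmetrised, giving dissipativity constant $m/2$ in the cited bound); note, though, a small internal wobble in your write-up --- you announce the intermediate target $\chi^2(\nu\bPLt\|\pi)\le e^{-tm}\chi^2(\nu\|\pi)$ and conclude with ``exponent $tm/2$,'' understating what your own computation delivers. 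What the paper's route buys is that all analytic regularity (domains, integration by parts without boundary terms, differentiation under the integral) is absorbed into the citation, whereas your identity $\frac{d}{dt}\var_\pi(g_t)=-2\bfE_\pi[\|\nabla g_t\|_2^2]$ genuinely needs the approximation argument you sketch; a cleaner way to discharge it is to bypass time-differentiation entirely via the spectral theorem applied to $g_t-1=\bPLt(g_0-1)$, which gives $\|g_t-1\|_{L^2(\pi)}\le e^{-mt}\|g_0-1\|_{L^2(\pi)}$ directly once the spectral gap is in hand.
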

The proof of this lemma, postponed to Section~\ref{sec:proofs2}, is based on the bounds on the spectral gap established 
in~\cite[Remark 4.14]{ChenWang}, see also~\cite[Corollary 4.8.2]{BGL14}.  In simple words, inequality (\ref{ineq:lem1}) shows that
for large values of $t$, the distribution of $\bL_t$ approaches exponentially fast to the target distribution, and the idea behind
the LMC is to approximate $\bL_t$ by $\bvartheta^{(k,h)}$ for $t=kh$. Note that inequalities of type (\ref{ineq:lem1}) can be obtained
under conditions (such as the curvature-dimension condition, see \citet[Definition 1.16.1 and Theorem 4.8.4]{BGL14}) weaker than the
strong log-concavity required in the present work. However, we decided to restrict ourselves to the strong log-concavity condition
since it is easy to check and is commonly used in machine learning and optimisation.

The first and probably the most influential work providing probabilistic analysis of asymptotic properties of the LMC algorithm
is \citep{RobertsTweedie96}. However, one of the recommendations made by the authors of that paper is to avoid using Langevin algorithm
as it is defined in (\ref{algoLMC}), or to use it very cautiously, since the ergodicity of the corresponding Markov chain is very
sensitive to the choice of the parameter $h$. Even in the cases where the Langevin diffusion is geometrically ergodic, the inappropriate
choice of $h$ may result in the transience of the Markov chain $\{\bvartheta^{(k,h)}\}$. These findings have very strongly influenced 
the subsequent studies since all the ensuing research focused essentially on the Metropolis adjusted version of the LMC, known as
Metropolis adjusted Langevin algorithm (MALA), and its modifications
\citep{RobertsRosenthal98,StramerTweedie99-1,StramerTweedie99-2,Jarner2000,RobertsStramer02,Pillai2012,Xifara14}.

In contrast to this, we show here that under the strong convexity assumption imposed on $f$ (or, equivalently, on $-\log \pi$) coupled
with the Lipschitz continuity of the gradient of $f$, one can ensure the non-transience of the Markov chain $\bvartheta^{(k,h)}$
by simply choosing $h\le 1/M$. In fact, the non-explosion of this chain follows from the following proposition the proof of which is
very strongly inspired by the one of Theorem~\ref{th:1}.

\begin{proposition}\label{prop:1}
Let the function $f$ be continuously differentiable on $\RR^p$ and satisfy (\ref{convex1}) with $f^* = \inf_{\bx\in\RR^p} f(\bx)$. Then, for every $h\le 1/M$, we have
\begin{equation}\label{in:prop1}
\bfE \big[f(\bvartheta^{(k,h)})-f^*\big] \le (1-mh)^k\bfE \big[f(\bvartheta^{(0)})-f^*\big] + \frac{Mp}{m}.
\end{equation}
\end{proposition}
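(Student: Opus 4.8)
The plan is to track the scalar sequence $a_k := \bfE\big[f(\bvartheta^{(k,h)})-f^*\big]$ and to establish the one-step contraction $a_{k+1}\le (1-mh)a_k + Mhp$, after which (\ref{in:prop1}) follows by unrolling a geometric recursion. Writing $\btheta=\bvartheta^{(k,h)}$ and noting that the increment is $\bvartheta^{(k+1,h)}-\btheta = -h\nabla f(\btheta)+\sqrt{2h}\,\bxi^{(k+1)}$, I would start from the quadratic upper bound implied by the Lipschitz-gradient condition in (\ref{convex1}), namely $f(\by)\le f(\bx)+\nabla f(\bx)^\top(\by-\bx)+\tfrac{M}{2}\|\by-\bx\|_2^2$, applied with $\bx=\btheta$ and $\by=\bvartheta^{(k+1,h)}$. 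This is the same device that underlies Theorem~\ref{th:1}, now with a stochastic increment.

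The second step is to take the conditional expectation given $\bvartheta^{(k,h)}$. Since $\bxi^{(k+1)}$ is centered, independent of $\btheta$, and satisfies $\bfE\|\bxi^{(k+1)}\|_2^2=p$, both cross terms involving $\bxi^{(k+1)}$ vanish and the squared increment contributes $\tfrac{M}{2}\big(h^2\|\nabla f(\btheta)\|_2^2+2hp\big)$. Collecting terms gives
\begin{equation*}
\bfE\big[f(\bvartheta^{(k+1,h)})\,\big|\,\btheta\big]\le f(\btheta)-h\Big(1-\tfrac{Mh}{2}\Big)\|\nabla f(\btheta)\|_2^2+Mhp.
\end{equation*}
The dimension $p$ enters precisely through the noise variance $\bfE\|\bxi^{(k+1)}\|_2^2=p$, which is the source of the additive term in (\ref{in:prop1}).

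To convert the gradient term into a multiple of $f(\btheta)-f^*$, I would use the gradient-domination inequality $\|\nabla f(\btheta)\|_2^2\ge 2m\big(f(\btheta)-f^*\big)$, obtained by minimising the right-hand side of the strong-convexity inequality in (\ref{convex1}) over its free argument. The step-size restriction $h\le 1/M$ is what makes this usable: it guarantees $1-\tfrac{Mh}{2}\ge 0$, so the gradient term retains its favourable sign and may be bounded via gradient domination. Subtracting $f^*$ then yields the coefficient $1-2mh+mMh^2$ in front of $f(\btheta)-f^*$, and the elementary bound $1-2mh+mMh^2\le 1-mh$ --- equivalent to $mh(1-Mh)\ge 0$, which again holds because $h\le 1/M$ --- collapses it to the clean factor $1-mh$. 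Taking total expectations produces $a_{k+1}\le(1-mh)a_k+Mhp$.

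Finally, unrolling the recursion gives $a_k\le(1-mh)^k a_0+Mhp\sum_{j=0}^{k-1}(1-mh)^j$, and bounding the geometric sum by $1/(mh)$ --- valid since $mh\le Mh\le 1$ forces $1-mh\in[0,1)$ --- turns the additive part into $Mp/m$, which is exactly (\ref{in:prop1}). I expect the only genuine subtlety to lie in the first two steps: handling the Gaussian increment so that the linear and quadratic cross terms disappear and the noise leaves behind the dimensional term $Mhp$, and then combining the smoothness bound, the strong-convexity bound, and the constraint $h\le 1/M$ so that the two $O(h^2)$ contributions merge into the single contraction factor $1-mh$. Everything after the one-step inequality is a routine linear recursion.
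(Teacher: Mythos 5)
Your proof is correct and follows essentially the same route as the paper's: the descent lemma applied to the LMC increment, vanishing Gaussian cross terms leaving the $Mhp$ noise contribution, the gradient-domination bound $\|\nabla f\|_2^2\ge 2m(f-f^*)$, and unrolling the resulting linear recursion. The only difference is cosmetic: the paper keeps the sharper per-step contraction $\gamma=mh(2-Mh)$ throughout (thereby proving the stronger Proposition~\ref{prop:1b} with additive term $\frac{Mp}{m(2-Mh)}$) and only weakens $(1-\gamma)^k\le(1-mh)^k$ at the end, whereas you weaken $1-2mh+mMh^2\le 1-mh$ before summing the geometric series, which suffices exactly for the stated bound $\frac{Mp}{m}$.
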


Note that under the condition $h\le 1/M$, the quantity $1-mh$ is always nonnegative. Indeed, it follows (see Lemma~\ref{lem:5} in Section~\ref{sec:proofs2}) from the Taylor expansion and
the Lipschitz continuity of the gradient $\nabla f$ that $f(\btheta)-f(\bar\btheta)-\nabla f(\bar\btheta)^\top(\btheta-\bar\btheta)
\le \frac{M}{2}\|\btheta-\bar\btheta\|_2^2$ for every $\btheta,\bar\btheta\in\RR^p$, which---in view of (\ref{convex1})---entails that $m\le M$ and, therefore, $1-m h\ge 1-M h\ge 0$.
On the other hand, in view of the strong convexity of $f$, inequality (\ref{in:prop1}) implies that
\begin{equation}\label{eq:normtheta}
\bfE \big[\|\bvartheta^{(k,h)}-\btheta^*\|_2^2\big] \le \frac{M}{m}\bfE \big[\|\bvartheta^{(0)}-\btheta^*\|_2^2\big] + \frac{2Mp}{m^2},
\end{equation}
where $\btheta^*$ stands for the point of (global) minimum of $f$. As a consequence, the sequence $\bvartheta^{(k,h)}$ produced by the LMC algorithm
is bounded in $L^2$ provided that $h\le 1/M$.

A crucial step in analyzing the long-time behaviour of the LMC algorithm is the assessment of the distance between the distribution of
the random variable $\bL_{Kh}$ and that of $\bvartheta^{(K,h)}$. It is intuitively clear that for a fixed $K$ this distance should
tend to zero when $h$ tends to zero. However, in order to get informative bounds we need to quantify the rate of this convergence.
To this end, we follow the ideas presented in \citep{colt_DalalyanT09,DalalyanTsybakov12a} which consist in performing the following 
two steps. First, a continuous-time Markov process $\{\bD_t:t\ge 0\}$ is introduced such that the distribution of the random vectors $\big(\bvartheta^{(0)},\bvartheta^{(1,h)},\ldots,\bvartheta^{(K,h)}\big)$ and $\big(\bD_{0},\bD_{h},\ldots,\bD_{Kh}\big)$
coincide. Second, the distance between the distributions of the variables $\bD_{Kh}$  and $\bL_{Kh}$ is bounded from above by the 
distance between the distributions of the continuous-time processes $\{\bD_t:t\in[0,Kh]\}$ and $\{\bL_t:t\in[0,Kh]\}$.

To be more precise, we introduce a diffusion-type continuous-time process $\bD$ obeying the following stochastic differential equation:
\begin{equation}\label{D-SDE}
d\bD_t = \boldb_t(\bD)\,dt + \sqrt{2} \; d\bW_t,\qquad t\ge 0,\qquad \bD_0 = \bvartheta^{(0)},
\end{equation}
with the (nonanticipative) drift $\boldb_t(\bD) =  - \sum_{k=0}^{\infty} \nabla f(\bD_{kh})\ind_{[kh,(k+1)h[}(t)$.
By integrating the last equation on the interval $[kh,(k+1)h]$, we check that the increments of this process satisfy
$\bD_{(k+1)h}-\bD_{kh} = -h\nabla f(\bD_{kh}) + \sqrt{2h} \bzeta^{(k+1)}$, where $\bzeta^{(k+1)} = (\bW_{(k+1)h}-\bW_{kh})/\sqrt{h}$.
Since the Brownian motion is a Gaussian process with independent increments, we conclude that   $\{\bzeta^{(k)}:k=1,\ldots,K\}$ is
a sequence of iid standard Gaussian random vectors. This readily implies the equality of the distributions of the random vectors $\big(\bvartheta^{(0)},\bvartheta^{(1,h)},\ldots,\bvartheta^{(K,h)}\big)$ and $\big(\bD_{0},\bD_{h},\ldots,\bD_{Kh}\big)$.

Note that the specific form of the drift $\boldb$ used in the LMC algorithm has the advantage of meeting the following two conditions.
First, $\boldb_t(\bL)$ is close to $-\nabla f(\bL_t)$, the drift of the Langevin diffusion. Second, it is possible to sample from the
distribution $\bPDh(\bx,\:\cdot\:)$, where $h$ is the step of discretisation used in the LMC algorithm. Any nonanticipative drift
function satisfying these two conditions may be used for defining a version of the LMC algorithm. Such an example, the LMC algorithm
with Ozaki discretisation, is considered in Section~\ref{Ozaki}.

To close this section, we state an inequality that will be repeatedly used in this work and the proof of which---based on the
Girsanov formula---can be found, for instance, in \citep{DalalyanTsybakov12a}.
If for some $B >0$ the nonanticipative drift function $\boldb:C(\RR_+,\RR^p)\times \RR_+\to \RR^p$  satisfies the inequality
$\|\boldb(\bD,t)\|_2\le B\big(1+\|\bD\|_\infty\big)$ for every $t\in[0,Kh]$ and every $\bD\in C(\RR_+,\RR^p)$, then
the Kullback-Leibler divergence between $\PP_L^{\bx,Kh}$ and $\PP_D^{\bx,Kh}$, the distributions of the processes $\big\{\bL:t\in[0,Kh]\big\}$ and
$\big\{\bD:t\in[0,Kh]\big\}$ with the initial value $\bL_0=\bD_0=\bx$, is given by
\begin{equation}\label{eq:KL}
\text{KL}\big(\PP_L^{\bx,Kh}\|\PP_D^{\bx,Kh}\big) = \frac14\int_0^{Kh} \bfE\big[\|\nabla f(\bD_t)+\boldb_t(\bD)\|_2^2\big]\,dt.
\end{equation}
It is worth emphasising that the last inequality remains valid when the initial values of the processes $\bD$ and $\bL$ are random but have
the same distribution.

Note that the idea of discretising the diffusion process in order to approximately sample from its invariant density is not new.
It can be traced back at least to \citep{Lamberton2}, see also the thesis \citep{LemairePHD} for an overview. The results therein are
stated for more general discretisation with variable step-sizes but are of asymptotic nature. This point of view has been adopted
and extended to the nonasymptotic case in the recent work \citep{Moulines2015}.

\section{Nonasymptotic bounds on the error of the LMC algorithm}

We are now in a position to establish a nonasymptotic bound with explicit constants on the distance between the
target distribution $\bP_\pi$ and the one produced by the LMC algorithm. As explained earlier, the bound is obtained by controlling two types
of errors: the error of approximating $\bP_\pi$ by the distribution of the Langevin diffusion $\bL_{Kh}$ (\ref{L-SDE}) and the error of
approximating the Langevin diffusion by its discretised version $\bD$ given by  (\ref{D-SDE}). The first error is a decreasing function of $T=Kh$:
in order to make this error small it is necessary to choose a large $T$. A rather precise quantitative assessment of this error is given by
Lemma~\ref{lem:1} in the previous section. The second error vanishes when the step-size $h$ goes to zero, provided that $T=Kh$ is fixed.
Thus, it
is in our interest to choose a small $h$. However, our goal is not only to minimise the error, but also to reduce, as much as possible, the
computational cost of the algorithm. For a fixed $T$, if we choose a small value of $h$ then a large number of steps $K$ is necessary for
getting close to the target distribution. Therefore, the computational complexity is a decreasing function of $h$. In order to find a value of
$h$ leading to a reasonable trade-off between the computational complexity and the approximation error, we need to complement Lemma~\ref{lem:1} with a
precise bound on the second approximation error. This is done in the following lemma.

\begin{lemma}\label{lem:3}
Let $f:\RR^p\to\RR$ be a function satisfying the second inequality in (\ref{convex1}) and $\btheta^*\in\RR^p$ 
be a stationary point (\textit{i.e.}, $\nabla f(\btheta^*)=0$). For any $T>0$, let $\PP_L^{\bx,T}$ and $\PP_D^{\bx,T}$
be respectively the distributions of the Langevin diffusion (\ref{L-SDE}) and its approximation (\ref{D-SDE}) 
on the space of all continuous paths on $[0,T]$ with values in $\RR^p$, with a fixed initial value $\bx$. 
Then, if $h\le 1/(\alpha M)$ with $\alpha\ge 1$, it holds that
\begin{equation}\label{eq:KL1}
\text{\rm KL}\big(\PP_L^{\bx,Kh}\|\PP_D^{\bx,Kh}\big) \le  \frac{M^3h^2\alpha}{12(2\alpha-1)} (\|\bx-\btheta^*\|_2^2+2Khp)+\frac{pKM^2h^2}{4}.
\end{equation}
\end{lemma}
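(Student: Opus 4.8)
The plan is to start from the Girsanov identity \eqref{eq:KL}, which expresses the divergence as $\frac14\int_0^{Kh}\bfE\big[\|\nabla f(\bD_t)+\boldb_t(\bD)\|_2^2\big]\,dt$, and to exploit the piecewise-constant structure of the drift. On each interval $[kh,(k+1)h)$ one has $\boldb_t(\bD)=-\nabla f(\bD_{kh})$, so the integrand reduces to $\bfE\big[\|\nabla f(\bD_t)-\nabla f(\bD_{kh})\|_2^2\big]$, which by the second inequality of \eqref{convex1} is at most $M^2\,\bfE\big[\|\bD_t-\bD_{kh}\|_2^2\big]$. The whole problem is thereby reduced to controlling the mean-squared displacement of $\bD$ over a single step.

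First I would compute this displacement explicitly. Integrating \eqref{D-SDE} on $[kh,t]$ gives, for $t\in[kh,(k+1)h)$, the identity $\bD_t-\bD_{kh}=-(t-kh)\nabla f(\bD_{kh})+\sqrt2\,(\bW_t-\bW_{kh})$. Since $\bD_{kh}$ is independent of the centered increment $\bW_t-\bW_{kh}$, which has covariance $(t-kh)\bfI$, the cross term vanishes in expectation and $\bfE\big[\|\bD_t-\bD_{kh}\|_2^2\big]=(t-kh)^2 v_k+2p(t-kh)$, where I write $v_k:=\bfE\big[\|\nabla f(\bD_{kh})\|_2^2\big]$. Using the elementary integrals $\int_{kh}^{(k+1)h}(t-kh)^2\,dt=h^3/3$ and $\int_{kh}^{(k+1)h}2(t-kh)\,dt=h^2$ and summing over $k=0,\dots,K-1$ yields
\begin{equation*}
\text{KL}\big(\PP_L^{\bx,Kh}\|\PP_D^{\bx,Kh}\big)\le \frac{M^2h^3}{12}\sum_{k=0}^{K-1}v_k+\frac{pKM^2h^2}{4}.
\end{equation*}
The second term already coincides with the last term of the claimed bound, so it remains to estimate $\sum_k v_k$.

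The main obstacle is precisely this sum: because only the second inequality of \eqref{convex1} is assumed, there is no contraction to lean on, and a crude bound $v_k\le M^2\bfE\|\bD_{kh}-\btheta^*\|_2^2$ leads to a geometric growth in $k$. The key idea is to use $f$ itself as a Lyapunov function, mimicking the proof of Proposition~\ref{prop:1}. Applying the descent inequality $f(\by)\le f(\bz)+\nabla f(\bz)^\top(\by-\bz)+\frac M2\|\by-\bz\|_2^2$ (Lemma~\ref{lem:5}) to the update $\bD_{(k+1)h}=\bD_{kh}-h\nabla f(\bD_{kh})+\sqrt{2h}\,\bzeta^{(k+1)}$ and taking the conditional expectation given $\bD_{kh}$ (so that the terms linear in $\bzeta^{(k+1)}$ drop out and $\bfE\|\bzeta^{(k+1)}\|_2^2=p$) gives
\begin{equation*}
\bfE\big[f(\bD_{(k+1)h})\big]\le \bfE\big[f(\bD_{kh})\big]-h\Big(1-\tfrac{Mh}2\Big)v_k+Mhp.
\end{equation*}
Telescoping over $k=0,\dots,K-1$ and using $\bfE[f(\bD_{Kh})]\ge f^*=f(\btheta^*)$ together with the descent inequality at the minimiser, $f(\bx)-f^*\le\frac M2\|\bx-\btheta^*\|_2^2$, I obtain $h(1-\frac{Mh}2)\sum_k v_k\le \frac M2\big(\|\bx-\btheta^*\|_2^2+2Khp\big)$, that is $\sum_k v_k\le M\big(\|\bx-\btheta^*\|_2^2+2Khp\big)/\big(h(2-Mh)\big)$. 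This step tacitly uses that the stationary point $\btheta^*$ is a global minimiser, which holds in the log-concave setting of interest.

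Finally I would substitute this estimate into the first term, giving $\frac{M^2h^3}{12}\sum_k v_k\le \frac{M^3h^2}{12(2-Mh)}\big(\|\bx-\btheta^*\|_2^2+2Khp\big)$. Since $h\le 1/(\alpha M)$ forces $Mh\le 1/\alpha$ and hence $2-Mh\ge(2\alpha-1)/\alpha$, we have $1/(2-Mh)\le \alpha/(2\alpha-1)$, which turns this into $\frac{M^3h^2\alpha}{12(2\alpha-1)}\big(\|\bx-\btheta^*\|_2^2+2Khp\big)$ and completes the bound. The only delicate point is the energy estimate of the third paragraph; everything else is bookkeeping, and one should check en route that $Mh<2$ (guaranteed by $h\le1/(\alpha M)$ with $\alpha\ge1$) so that the coefficient $1-\frac{Mh}2$ stays positive.
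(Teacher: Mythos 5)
Your proof is correct and follows essentially the same route as the paper: the Girsanov identity (\ref{eq:KL}), the Lipschitz reduction to $M^2\,\bfE\big[\|\bD_t-\bD_{kh}\|_2^2\big]$, the explicit one-step displacement computation, and then the descent-lemma telescoping bound on $h\sum_k\bfE\big[\|\nabla f(\bD_{kh})\|_2^2\big]$, which is exactly the content of the paper's Corollary~\ref{cor:3} (proved there via inequality (\ref{eq:2}) from Proposition~\ref{prop:1b}), reproduced here inline. Your remark that the energy estimate tacitly requires the stationary point $\btheta^*$ to be a global minimiser is a fair observation that applies equally to the paper's own argument.
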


Let us set $T= Kh$. Since it simplifies the mathematical formulae and is possible to achieve in practice in view of Theorem~\ref{th:1},
we assume in the sequel that the initial value of the LMC algorithm is drawn at random from the Gaussian distribution with mean $\btheta^*$, 
a stationary point of $f$, and covariance matrix $M^{-1}\bfI_p$. Then, in view of (\ref{eq:KL1}) and the convexity of the Kullback-Leibler 
divergence,  we get (for $\nu = \mathcal N_p(\btheta^*,M^{-1}\bfI_p)$)
\begin{align}\label{eq:KL2}
\text{\rm KL}\big(\nu\PP_L^{T}\|\nu\PP_D^{T}\big) 
        &\le  \frac{pM^2 h^2\alpha}{12(2\alpha-1)}+\frac{pM^3Th^2\alpha}{6(2\alpha-1)}+\frac{pM^2Th}{4}\nonumber\\
        &=\frac{pM^2Th}{4}\bigg(\frac{\alpha}{3K(2\alpha-1)}+\frac{2M h\alpha}{3(2\alpha-1)}+1\bigg)\le \frac{pM^2Th\alpha}{2(2\alpha-1)},
\end{align}
for every $K\ge \alpha$ and $h\le 1/(\alpha M)$. We can now state the main result of this section, the proof of which 
is postponed to Section~\ref{sec:proofs2}. 

\begin{theorem}\label{th:2}
Let $f:\RR^p\to\RR$ be a function satisfying (\ref{convex1}) and $\btheta^*\in\RR^p$ be its global minimum point.
Assume that for some $\alpha\ge 1$, we have $ h\le 1/(\alpha M)$ and $K \ge \alpha$. Then, for any time horizon $T=Kh$,
the total variation distance between the target distribution $\bP_\pi$ and the approximation $\nu\bP^{K}_{\bvartheta}$ 
furnished by the LMC algorithm with the initial distribution $\nu = \mathcal N_p(\btheta^*, M^{-1}\bfI_p)$ satisfies
\begin{align}\label{mainbound1}
\big\|\nu\bP_{\bvartheta}^{K}-\bP_\pi\big\|_{\rm TV}
    &\le \frac12 \exp\bigg\{\frac{p}{4}\log \bigg(\frac{M}{m}\bigg)-\frac{Tm}{2}\bigg\}
        +\bigg\{\frac{pM^2Th\alpha}{4(2\alpha-1)}\bigg\}^{1/2}.
\end{align}
\end{theorem}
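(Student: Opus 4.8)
The plan is to split the error by the triangle inequality into a discretisation error and a diffusion error, and to bound each separately with the machinery assembled above. Set $T=Kh$ and recall that $\nu\bP^{K}_{\bvartheta}$, the law of $\bvartheta^{(K,h)}$, coincides with the marginal of $\bD_{T}$, because the vectors $(\bvartheta^{(0)},\ldots,\bvartheta^{(K,h)})$ and $(\bD_{0},\ldots,\bD_{Kh})$ share the same distribution. Inserting the marginal law $\nu\bfP^{T}_{\!\!\bL}$ of the Langevin diffusion $\bL_{T}$ started from $\nu$ gives
\begin{align*}
\big\|\nu\bP^{K}_{\bvartheta}-\bP_\pi\big\|_{\rm TV}
   &\le \big\|\nu\bP^{K}_{\bvartheta}-\nu\bfP^{T}_{\!\!\bL}\big\|_{\rm TV}
     + \big\|\nu\bfP^{T}_{\!\!\bL}-\bP_\pi\big\|_{\rm TV}.
\end{align*}

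For the first (discretisation) term I would note that the endpoint evaluation $\omega\mapsto\omega(T)$ is a measurable map on the path space, so passing to marginals cannot increase the total variation distance; hence this term is at most the distance $\|\nu\PP_D^{T}-\nu\PP_L^{T}\|_{\rm TV}$ between the full path laws. Pinsker's inequality (in the convention $\|\cdot\|_{\rm TV}=\sup_A|\cdot(A)|$) bounds this by $\{\tfrac12\text{KL}(\nu\PP_L^{T}\|\nu\PP_D^{T})\}^{1/2}$, and substituting the estimate (\ref{eq:KL2}) yields exactly the second summand $\{pM^2Th\alpha/(4(2\alpha-1))\}^{1/2}$ of (\ref{mainbound1}).

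The second (diffusion) term is controlled by Lemma~\ref{lem:1}, which reduces everything to bounding $\chi^2(\nu\|\pi)$ for the Gaussian start $\nu=\mathcal N_p(\btheta^*,M^{-1}\bfI_p)$; this is where I expect the real work to lie. Writing $\chi^2(\nu\|\pi)+1=\int \nu(\bx)^2/\pi(\bx)\,d\bx$ and inserting the explicit Gaussian density of $\nu$, the integrand is proportional to $\big(\int e^{-f}\big)\exp\{f(\bx)-M\|\bx-\btheta^*\|_2^2\}$. I would then exploit the two-sided quadratic control of $f$ about its minimiser, where $\nabla f(\btheta^*)=0$: the smoothness bound $f(\bx)-f(\btheta^*)\le\frac{M}2\|\bx-\btheta^*\|_2^2$ from Lemma~\ref{lem:5} renders the exponent Gaussian-integrable, contributing a factor $(2\pi/M)^{p/2}$, while the strong-convexity bound $f(\bx)-f(\btheta^*)\ge\frac{m}2\|\bx-\btheta^*\|_2^2$ controls the normaliser as $\int e^{-f}\le e^{-f(\btheta^*)}(2\pi/m)^{p/2}$. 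Multiplying the two Gaussian integrals, the factors $e^{\pm f(\btheta^*)}$ and the powers of $2\pi$ cancel, leaving $\chi^2(\nu\|\pi)+1\le (M/m)^{p/2}$ and hence $\chi^2(\nu\|\pi)^{1/2}\le (M/m)^{p/4}=\exp\{\tfrac p4\log(M/m)\}$.

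Plugging this into (\ref{ineq:lem1}) with $t=T$ gives $\|\nu\bfP^{T}_{\!\!\bL}-\bP_\pi\|_{\rm TV}\le\tfrac12\exp\{\tfrac p4\log(M/m)-Tm/2\}$, the first summand of (\ref{mainbound1}); adding the two contributions finishes the argument. The delicate point is the $\chi^2$ estimate: the upper and lower quadratic envelopes of $f$ must be matched so that the dimension-dependent constants collapse into the single factor $(M/m)^{p/2}$, and one must track the square root carefully to land on the exponent $p/4$ rather than $p/2$.
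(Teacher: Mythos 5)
Your proposal is correct and follows essentially the same route as the paper: the same triangle-inequality decomposition, Lemma~\ref{lem:1} for the convergence of the Langevin diffusion, and the projection-to-marginals plus Pinsker argument combined with (\ref{eq:KL2}) for the discretisation error. The only cosmetic difference is that you bound $\chi^2(\nu\|\pi)$ by a direct computation with the two quadratic envelopes of $f$ around $\btheta^*$, whereas the paper obtains the identical bound $(M/m)^{p/2}$ by specialising Lemma~\ref{lem:2} to $\bx=\btheta^*$ and $h=1/(2M)$, so that $\nu_{h,\btheta^*}=\nu$.
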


\begin{remark}
The second term in the right-hand side of \eqref{mainbound1} tends to infinity when the time horizon $T$ 
goes to infinity while the step-size $h$ remains fixed. Since the total variation is always bounded by one, 
the obtained bound is not sharp for large values of $T$. The main reason for this is the fact that we upper
bound the total variation distance by the Kullback-Leibler divergence. Improving this argument in order to
get a tighter upper bound is a challenging open problem. 
\end{remark}

We provide here a simple consequence of the last theorem that furnishes
easy-to-apply rules for choosing the time horizon $T$ and the step-size $h$.

\begin{corollary}\label{cor:1}
Let $p\ge 2$, $f$ satisfy (\ref{convex1}) and $\epsilon\in(0,1/2)$ be a target precision level. Let the time horizon $T$ and the step-size
$h$ be defined by
\begin{equation}\label{eq:h}
T = \frac{4\log\big(1/\eps\big) + p\log\big(M/m\big)}{2m},\qquad  h = \frac{\epsilon^2(2\alpha-1)}{M^2Tp\alpha},
\end{equation}
where $\alpha = (1+MpT\epsilon^{-2})/2$. Then the output of the $K$-step LMC algorithm, with $K=\lceil T/h\rceil$, satisfies
$\big\|\nu\bP_{\bvartheta}^{K}-\bP_\pi\big\|_{\rm TV}\le \epsilon$.
\end{corollary}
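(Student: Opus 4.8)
The plan is to substitute the prescribed $T$ and $h$ into the bound~\eqref{mainbound1} of Theorem~\ref{th:2} and to show that its two summands are each controlled by $\epsilon/2$. The key algebraic simplification is that the choice $\alpha=(1+MpT\epsilon^{-2})/2$ yields $2\alpha-1=MpT\epsilon^{-2}$, whence the prescribed step-size collapses to
\begin{equation*}
h=\frac{\epsilon^2(2\alpha-1)}{M^2Tp\alpha}=\frac{1}{\alpha M}.
\end{equation*}
Thus $h$ exactly saturates the admissibility constraint $h\le 1/(\alpha M)$ of Theorem~\ref{th:2}, and the remaining expressions simplify considerably.

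Before invoking the theorem I would check its three hypotheses. Since $p\ge2$, $m\le M$ and $\epsilon<1/2$, one has $T\ge 2\log(1/\epsilon)/m$ and therefore $MpT\ge p\,mT\ge 4\log(1/\epsilon)>1>\epsilon^2$, which is precisely $\alpha\ge1$. The constraint $h\le 1/(\alpha M)$ holds with equality by the display above. Finally $K=\lceil T/h\rceil\ge T/h=\alpha MT$, and because $MT\ge mT\ge 2\log(1/\epsilon)>1$ we obtain $K>\alpha$, so $K\ge\alpha$.

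I would then evaluate the two summands at the genuine time horizon $T'=Kh$ used by Theorem~\ref{th:2}. The value of $T$ is engineered so that at horizon $T$ the exponent equals $\tfrac p4\log(M/m)-\tfrac{Tm}2=\log\epsilon$; hence the first summand is $\epsilon/2$ at horizon $T$ and, being decreasing in the horizon, equals $\tfrac\epsilon2 e^{-(T'-T)m/2}$ at $T'\ge T$. Substituting $h=1/(\alpha M)$ and $2\alpha-1=MpT\epsilon^{-2}$ into the second summand reduces $pM^2T'h\alpha/\{4(2\alpha-1)\}$ to $\epsilon^2T'/(4T)$, so the second summand equals $\tfrac\epsilon2\sqrt{T'/T}$.

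The main obstacle is the rounding $K=\lceil T/h\rceil$: the true horizon overshoots the target by $\delta:=T'-T<h$, which helps the first (decreasing) summand but inflates the second (increasing) one. After collecting the two expressions, the claim $\|\nu\bP_{\bvartheta}^{K}-\bP_\pi\|_{\rm TV}\le\epsilon$ reduces to showing
\begin{equation*}
e^{-\delta m/2}+\sqrt{1+\delta/T}\le 2,\qquad 0\le\delta<h.
\end{equation*}
I would use $\sqrt{1+x}\le 1+x/2$ and $e^{-y}\le 1-y+y^2/2$ (valid for $y\ge0$) to bound the left-hand side by $2-\delta m/2+\delta^2m^2/8+\delta/(2T)$; it then suffices that $\delta m^2/8+1/(2T)\le m/2$. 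Since $\delta<h=1/(\alpha M)\le 1/m$ and $T\ge 2\log(1/\epsilon)/m\ge 2\log2/m$, the left-hand side is at most $m/8+m/(4\log2)<m/2$, which closes the argument.
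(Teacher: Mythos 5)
Your proof is correct and takes essentially the same route as the paper's: substitute the prescribed $T$ and $h$ into the bound (\ref{mainbound1}) of Theorem~\ref{th:2} and verify its hypotheses, and your checks ($2\alpha-1=MpT\epsilon^{-2}$ so that $h=1/(\alpha M)$ exactly, $\alpha\ge 1$, and $K\ge T/h=\alpha MT>\alpha$) coincide with the paper's. The one difference is that the paper's one-line proof simply asserts both summands are bounded by $\epsilon/2$, tacitly treating $Kh=T$, whereas you explicitly handle the overshoot $\delta=Kh-T<h$ from the ceiling in $K=\lceil T/h\rceil$ via the trade-off inequality $e^{-\delta m/2}+\sqrt{1+\delta/T}\le 2$ (whose arithmetic, $1/8+1/(4\log 2)<1/2$, is correct), thereby closing a small gap the paper glosses over.
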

\begin{proof}
The choice of $T$ and $h$ implies that the two summands in the right-hand side of (\ref{mainbound1}) are bounded by $\epsilon/2$.
Furthermore, one easily checks that  $\alpha = (1+MpT\epsilon^{-2})/2$ is larger than one and satisfies $h\le 1/(\alpha M)$.
In addition, $K \ge T/h\ge \alpha MT \ge 2\alpha (M/m)\log(1/\epsilon)\ge \alpha \log 4$, which ensures the applicability
of Theorem~\ref{th:2}.
\end{proof}

Let us first remark that the claim of Corollary~\ref{cor:1} can be simplified by taking $\alpha=1$. However, for this value of $\alpha$
the factor $(2\alpha-1)/\alpha$ equals one, whereas for the slightly more complicated choice recommended by Corollary~\ref{cor:1}, this factor
is close to two. In practice, increasing $h$ by a factor $2$ results in halving the running time, which represents a non-negligible gain.

Besides providing concrete and easily applicable guidance for choosing the step of discretisation and the stopping rule
for the LMC algorithm to achieve a prescribed error rate, the last corollary tells us that in order to get an error smaller than
$\epsilon$, it is enough to perform $K = O(T^2p/\eps^2) = O\big(\eps^{-2}(p^3+p\log^2(1/\eps))\big)$ evaluations of the gradient of $f$.
To the best of our knowledge, this is the first result that establishes polynomial in $p$ guarantees for sampling from a log-concave
density using the LMC algorithm. We discuss the relation of this and subsequent results to earlier work in Section~\ref{sec:concl}.

\section{Possible extensions}

In this section,  we state some extensions of the previous results that do not require any major change in the proofs, but
might lead to improved computational complexity or be valid under relaxed assumptions in some particular cases.

\subsection{Improved bounds for a ``warm start''}\label{sec:4.1} The choice of the distribution $\nu$ of the initial value $\btheta^{(0)}$
has a significant impact on the convergence of the LMC algorithm. If $\nu$ is close to $\pi$, smaller number of
iterations might be enough for making the TV-error smaller than $\eps$. The goal of this section is to present
quantitative bounds characterising the influence of $\nu$ on the convergence and, as a consequence, on the computational complexity
of the LMC algorithm.

The first observation that can be readily deduced from (\ref{eq:KL1}) is that for any $h\le 1/(2M)$,
\begin{equation}\label{eq:KL1c}
\text{\rm KL}\big(\nu\PP_L^{T}\|\nu\PP_D^{T}\big) \le  \frac{M^3h^2\bfE_{\bvartheta\sim\nu}[\|\bvartheta-\btheta^*\|_2^2]}{18} +
\frac{pM^2Th}{3}.
\end{equation}
Combining this bound with (\ref{eq:6}), Lemma~\ref{lem:1} and (\ref{eq:8}) we get
\begin{align*}
\big\|\nu\bP_{\bvartheta}^{K}-\bP_\pi\big\|_{\rm TV}
    &\le \frac12 \exp\bigg\{\frac{\log \chi^2(\nu\|\pi)-Tm}{2}\bigg\}
        +\bigg\{\frac{M^3h^2\bfE_{\nu}[\|\bvartheta-\btheta^*\|_2^2]+6pM^2Th}{18}\bigg\}^{1/2}.
\end{align*}
Elaborating on this inequality, we get the following result.

\begin{proposition}
Let $\nu$ be a probability density on $\RR^p$ such that the second-order moment $\mu_2 = \frac{M}{p}\bfE_{\bvartheta\sim\nu}[\|\bvartheta-\btheta^*\|_2^2]$ and
the divergence $\chi^2(\nu\|\pi)$ are finite. Then, the LMC algorithm having $\nu$ as initial distribution and using the time horizon $T$ and step-size
$h$ defined by
\begin{equation}\label{eq:hb}
T = \frac{2\log\big(1/\eps\big) + \log\chi^2(\nu\|\pi)}{m},\qquad  h = \frac{9\epsilon^2}{TM^2p(6+\mu_2)},
\end{equation}
satisfies, for $K = [T/h]\ge 2$, the inequality $\big\|\nu\bP_{\bvartheta}^{K}-\bP_\pi\big\|_{\rm TV}\le \epsilon$.
\end{proposition}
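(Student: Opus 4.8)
The plan is to start from the total-variation bound displayed immediately before the statement, which already splits the error into an ergodicity term (controlled by Lemma~\ref{lem:1}) and a discretisation term (controlled by (\ref{eq:KL1c}) through Pinsker's inequality), and to verify that the prescribed $T$ and $h$ force each of the two summands to be at most $\eps/2$. Along the way I would also confirm the side conditions under which that bound was derived, namely $h\le 1/(2M)$ and $K\ge 2$.

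For the first (ergodicity) summand I would simply substitute the value of $T$. The definition of $T$ gives $mT=2\log(1/\eps)+\log\chi^2(\nu\|\pi)$, so the exponent becomes $\tfrac12\big(\log\chi^2(\nu\|\pi)-mT\big)=\log\eps$, whence the first summand equals $\tfrac12\eps\le\eps/2$. This step is immediate, and it is exactly the reason the logarithmic term $\log\chi^2(\nu\|\pi)$ has been inserted into $T$.

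The second (discretisation) summand is where the real work lies. First I would eliminate $\bfE_{\bvartheta\sim\nu}[\|\bvartheta-\btheta^*\|_2^2]$ in favour of $\mu_2$ via $\bfE_{\bvartheta\sim\nu}[\|\bvartheta-\btheta^*\|_2^2]=(p/M)\mu_2$, which turns the quantity under the square root into a fixed multiple of $pM^2h\,(\mu_2 h+6T)$. The obstacle is the cross term $\mu_2 h^2$, quadratic in $h$: it must be absorbed into the term linear in $T$. To do this I would use $h\le T$, which follows from $K=[T/h]\ge 2$, to get $\mu_2 h+6T\le(6+\mu_2)T$, so that the bracket is bounded by a fixed multiple of $pM^2h\,(6+\mu_2)T$. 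Substituting $h=9\eps^2/\{TM^2p(6+\mu_2)\}$ makes this product exactly $9\eps^2$; together with the sharp Pinsker constant $\tfrac12$ (so that the quantity under the root is half of the bound (\ref{eq:KL1c})) this gives $\le\eps^2/4$, and taking the square root yields the desired $\eps/2$.

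Finally I would collect the side conditions. The hypothesis $K\ge 2$ is precisely what justifies $h\le T$ above, while $h\le 1/(2M)$---needed so that (\ref{eq:KL1c}) applies---is checked directly from the formula for $h$ using $\eps<1/2$, $p\ge 1$ and $6+\mu_2\ge 6$. Adding the two halves of the budget gives $\big\|\nu\bP_{\bvartheta}^{K}-\bP_\pi\big\|_{\rm TV}\le\eps$, as claimed. I expect the only delicate point to be the bookkeeping of the numerical constants, so that the two contributions fit exactly inside $\eps$; structurally the argument is a direct substitution into the inequality preceding the statement.
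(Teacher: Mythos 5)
Your proof is correct and is exactly the argument the paper intends (the paper declares the proof ``immediate'' and leaves it to the reader): substitute $T$ so that the ergodicity term equals $\eps/2$, use $h\le T$ (guaranteed by $K\ge 2$) to absorb the quadratic-in-$h$ term into the one linear in $T$, and substitute $h$ to bound the discretisation term by $\eps/2$. Your insistence on the sharp Pinsker factor $\tfrac12$---that is, bounding the second summand by $\{\tfrac12\,\text{KL}\}^{1/2}$ via (\ref{eq:8}) combined with (\ref{eq:KL1c}), rather than using the displayed inequality preceding the proposition verbatim---is in fact necessary: that display (with denominator $18$ under the square root) is a factor $\sqrt{2}$ too loose and with the stated $h$ would only yield $\eps/2+\eps/\sqrt{2}>\eps$, so your constant bookkeeping quietly repairs a small looseness that the paper glosses over.
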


The proof of this proposition is immediate and, therefore, is left to the reader. What we infer from this result is that
the choice of the initial distribution $\nu$ has a strong impact on the convergence of the LMC algorithm. For instance,
if for some specific $\pi$ we are able to sample from a density $\nu$ satisfying, for some $\varrho>0$, the relation
$\chi^2(\nu\|\pi)=O(p^\varrho)$ as $p\to\infty$, then the time horizon $T$ for approximating the target density $\pi$
within $\eps$ is $O(\log (p\vee\eps^{-1}))$ and the step-size satisfies $h^{-1}=O(\eps^{-2}p\log(p\vee\eps^{-1}))$. Thus,
in such a situation, one needs to perform $[T/h] = O(\eps^{-2}p\log^2(p\vee\eps^{-1}))$ evaluations of the gradient
of $f$ to get a sampling density within a distance of $\epsilon$ of the target, which is substantially smaller than
$O\big(\eps^{-2}(p^3+p\log^2(1/\eps))\big)$ obtained in the previous section in the general case.

\subsection{Preconditioning}\label{sec:precond} As it is frequently done in optimisation, one may introduce a
preconditioner in the LMC algorithm in
order to accelerate its convergence. To some extent, it amounts to choosing a definite positive $p\times p$ matrix $\bfA$, called
preconditioner, and applying the LMC algorithm to the function $g(\by) = f(\bfA\by)$. Let $\{\bfeta^{(k,h)}:k\in\NN\}$ be the
sequence obtained by the LMC algorithm applied to the function $g$, that is the density of $\bfeta^{(k,h)}$ is close to
$\pi_g(\by)\propto e^{-g(\by)}$ when $k$ is large and $h$ is small. Then, the sequence $\bvartheta^{(k,h)}= \bfA \bfeta^{(k,h)}$
is approximately sampled from the density $\pi_f(\bx)\propto e^{-f(\bx)}$. This follows from the fact that if $\bfeta\sim \pi_g$
then $\bfA\bfeta\sim \pi_f$. Furthermore, it holds that
$$
\|\bfP_{\bvartheta}^{k}-\bfP_{\pi_f}\|_{\rm TV} = \|\bfP_\bfeta^{k}-\bfP_{\pi_g}\|_{\rm TV},
$$
\textit{i.e.}, the approximation error of the LMC algorithm with a preconditioner $\bfA$ is characterised by Corollary~\ref{cor:1}.
This means that if the function $g$ satisfies condition (\ref{convex1}) with constants $(m_{\bfA},M_{\bfA})$, then the number of steps
$K$ after which the preconditioned LMC algorithm has an error bounded by $\eps$ is given by
$K = (M_{\bfA}/m_{\bfA})^2 p\eps^{-2}\big(2\log(1/\eps)+(p/2)\log(M_{\bfA}/m_{\bfA})\big)^2$. Hence, the preconditioner $\bfA$
yielding the best guaranteed computational complexity for the LMC algorithm is the matrix $\bfA$ minimising the ratio $M_{\bfA}/m_{\bfA}$.

The impact of preconditioning can be measured, for instance, in the case of multidimensional logistic regression considered in
Section~\ref{sec:experiments} below. In this case, the ratio $M_{\bfA}/m_{\bfA}$ is up to some constant factor equal to the condition 
number of the matrix $\bfA\Sigma_\bfX\bfA$, where $\Sigma_\bfX$ is the Gram matrix of the covariates. 

\subsection{Nonstrongly log-concave densities}\label{sec:nonstrong}

Theoretical guarantees developed in previous sections assume that the logarithm of the target density is strongly concave,
cf.\  assumption (\ref{convex1}). However, they can also be used for approximate sampling from a density which is log-concave but
not necessarily strongly log-concave; we call these densities nonstrongly log-concave. The idea is then to approximate the target 
density by a strongly log-concave one and to apply the LMC algorithm to the latter instead of the former one.

More precisely, assume that we wish to approximately sample from a multivariate target density $\pi(\bx)\propto \exp\{-f(\bx)\}$, 
where the function $f:\RR^p\to\RR$ is twice differentiable with Lipschitz continuous gradient (\textit{i.e.}, $f$ satisfies the
second inequality in (\ref{convex1})). Assume, in addition, that for every $R\in[0,+\infty]$ there exists $m_R\ge 0$ such that
$\nabla^2 f(\bx)\succeq m_{R}\bfI_p$ for every $\bx\in B=B_R(\bx_0)=\{\bx\in\RR^p:\|\bx-\bx_0\|_2\le R\}$. Here, $\bX_0$ is an 
arbitrarily fixed point in $\RR^p$. Note that if $m_\infty>0$, then this assumption implies the first inequality in (\ref{convex1}) 
with $m=m_\infty$. The purpose of this subsection is to deal with the case where $m_\infty$ equals 0 or is very small. Let $\gamma>0$ 
be a tuning parameter; we introduce the approximate log-density
\begin{equation}\label{f-gamma}
\bar f(\bx)  = f(\bx) + \frac{\gamma}{2}(\|\bx-\bx_0\|_2-R)^2\ind_{B^c}(\bx).
\end{equation}
This function satisfies both inequalities in (\ref{convex1}) with $\bar m =m_{2R}\wedge(m_{\infty}+0.5\gamma)$ and $\bar M = M+\gamma$.
Let us denote by $\bar\pi$ the density defined by $\bar\pi(\bx)\propto e^{-\bar f(\bx)}$ and by $\bfP_{\bar\pi}$
the corresponding probability distribution on $\RR^p$. Heuristically, it is natural to expect that under some mild assumptions
the distribution $\bfP_{\bar\pi}$ is close to the target $\bfP_\pi$ when $R$ is large and $\gamma$ is small. This claim is made rigorous
thanks to the following result, which is stated in a broad generality in order to be applicable to approximations $\bar f$
that are not necessarily of the form (\ref{f-gamma}).

\begin{lemma}\label{lem:lambda}
%Let $\bar f$ be a smooth convex function satisfying (\ref{convex1}) with constants $m_\gamma$ and $M_\gamma$.
Let $f$ and $\bar f$ be two functions such that $f(\bx)\le \bar f(\bx)$ for all $\bx\in\RR^p$ and both $e^{-f}$ and $e^{-\bar f}$
are integrable. Then the Kullback-Leibler divergence between the distribution $\bfP_{\bar\pi}$
defined by the density $\bar\pi(\bx) \propto e^{-\bar f(\bx)}$ and the target distribution $\bfP_\pi$ can be bounded as follows:
\begin{equation}
\text{\rm KL}\big(\bfP_\pi\|\bfP_{\bar\pi}\big) \le   \frac12\int_{\RR^p}\big(\bar f(\bx)-f(\bx)\big)^2\,\pi(\bx)\,d\bx.
\end{equation}
As a consequence, $\big\|\bfP_{\bar\pi}-\bfP_\pi\big\|_{\rm TV} \le \frac12\|\bar f-f\|_{L^2(\pi)}$.
\end{lemma}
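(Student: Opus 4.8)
The plan is to compute the Kullback--Leibler divergence explicitly in terms of the normalising constants and then to control the normalising-constant ratio using the hypothesis $\bar f\ge f$. Write $Z=\int_{\RR^p}e^{-f}\,d\bx$ and $\bar Z=\int_{\RR^p}e^{-\bar f}\,d\bx$, so that $\pi=e^{-f}/Z$ and $\bar\pi=e^{-\bar f}/\bar Z$, both finite by the integrability assumption. Setting $g=\bar f-f\ge 0$, the log-density ratio is $\log(\pi/\bar\pi)=\log(\bar Z/Z)+g$, whence
\[
\text{KL}\big(\bfP_\pi\|\bfP_{\bar\pi}\big)=\log\frac{\bar Z}{Z}+\int_{\RR^p}g(\bx)\,\pi(\bx)\,d\bx .
\]
First I would rewrite the ratio of normalising constants as an expectation: $\bar Z/Z=\int_{\RR^p}e^{-g}\,\pi\,d\bx=\bfE_\pi[e^{-g}]$, so that the whole divergence equals $\log\bfE_\pi[e^{-g}]+\bfE_\pi[g]$.

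The key step is a one-sided Taylor bound that exploits $g\ge 0$: for every real $x\ge 0$ one has $e^{-x}\le 1-x+x^2/2$. This follows since the difference $1-x+x^2/2-e^{-x}$ together with its first derivative vanishes at $0$, while its second derivative $1-e^{-x}$ is nonnegative on $[0,+\infty)$. Applying this pointwise at $x=g(\bx)$ and integrating against $\pi$ gives $\bfE_\pi[e^{-g}]\le 1-\bfE_\pi[g]+\tfrac12\bfE_\pi[g^2]$; the right-hand side is strictly positive, since by Jensen $\bfE_\pi[g^2]\ge\bfE_\pi[g]^2$ forces it to be at least $1-t+\tfrac12 t^2\ge \tfrac12$ with $t=\bfE_\pi[g]$. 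Combining with the elementary inequality $\log(1+u)\le u$ yields $\log\bfE_\pi[e^{-g}]\le-\bfE_\pi[g]+\tfrac12\bfE_\pi[g^2]$, and adding back $\bfE_\pi[g]$ cancels the linear terms, leaving exactly
\[
\text{KL}\big(\bfP_\pi\|\bfP_{\bar\pi}\big)\le\tfrac12\bfE_\pi[g^2]=\tfrac12\int_{\RR^p}\big(\bar f(\bx)-f(\bx)\big)^2\,\pi(\bx)\,d\bx,
\]
which is the claimed bound.

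For the final assertion I would invoke Pinsker's inequality in the form $\|\bfP_{\bar\pi}-\bfP_\pi\|_{\rm TV}\le\big(\tfrac12\text{KL}(\bfP_\pi\|\bfP_{\bar\pi})\big)^{1/2}$, valid for the convention $\|\nu\|_{\rm TV}=\sup_A|\nu(A)|$ adopted in the Notation section, and substitute the bound just obtained to get $\|\bfP_{\bar\pi}-\bfP_\pi\|_{\rm TV}\le\tfrac12\|\bar f-f\|_{L^2(\pi)}$.

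The main obstacle is bounding the normalising-constant ratio $\log(\bar Z/Z)$ from above; this is precisely where the sign condition $\bar f\ge f$ is indispensable, because the quadratic majorant $e^{-x}\le 1-x+x^2/2$ holds only for nonnegative arguments (for $x<0$ the correct inequality reverses). The remaining manipulations are elementary and require no further hypotheses beyond integrability.
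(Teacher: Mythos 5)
Your proof is correct and follows essentially the same route as the paper: you decompose $\mathrm{KL}(\bfP_\pi\|\bfP_{\bar\pi})$ into $\bfE_\pi[\bar f-f]$ plus the log of the normalising-constant ratio $\log\bfE_\pi[e^{-(\bar f-f)}]$, then apply $\log u\le u-1$ together with $e^{-u}\le 1-u+\tfrac12u^2$ for $u\ge 0$, and conclude via Pinsker. Your extra remarks (the second-derivative verification of the quadratic majorant and the Jensen positivity check) are harmless elaborations of steps the paper states without proof.
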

\begin{proof}%[Proof of Lemma~\ref{lem:lambda}]
Using the formula for the Kullback-Leibler divergence, we get
\begin{align}\label{ineq:11}
\text{\rm KL}\big(\bfP_{\pi}\|\bfP_{\bar\pi}\big)
	& = \int_{\RR^p} (\bar f(\bx)-f(\bx))\,\pi(\bx)\,dx+\log\int_{\RR^p} e^{f(\bx)-\bar f(\bx)}\pi(\bx)\,d\bx.%\nonumber\\
	%& = \frac{\gamma}2\int_{\RR^p} \|\bx\|_2^2\,\pi(\bx)\,dx+\log\int_{\RR^p} e^{-\frac{\gamma}{2}\|\bx\|_2^2}\pi(\bx)\,d\bx.
\end{align}
Applying successively the inequalities $\log u \le u-1$ and $e^{-u} \le 1-u+\frac12 u^2$ for every $u\ge 0$, we upper bound the
second term in the right-hand side of (\ref{ineq:11}) as follows:
$$
\log\int_{\RR^p} e^{f(\bx)-\bar f(\bx)}\pi(\bx)\,d\bx\le \int_{\RR^p} e^{f-\bar f}\pi-1
\le -\int_{\RR^p} (\bar f-f)\pi +\frac12\int_{\RR^p} (\bar f-f)^2\pi.
$$
Combining this inequality with (\ref{ineq:11}), we get the first claim. The last claim
of the lemma follows from the Pinsker inequality.
\end{proof}

For $\bar f$ given by (\ref{f-gamma}), we get $\big\|\bfP_{\bar\pi}-\bfP_\pi\big\|_{\rm TV} \le
\frac\gamma4 \big(\int_{B^c}(\|\bx-\bx_0\|_2-R)^4\,\pi(\bx)\,d\bx\big)^{1/2}$. Choosing the parameter $\gamma$ sufficiently
small and the parameter $R$ sufficiently large to ensure that $\|\bfP_{\bar\pi}-\bfP_\pi\|_{\rm TV} \le\eps/2$ and assuming
that $\pi$ has bounded fourth-order moment, we derive from this inequality and Corollary~\ref{cor:1} the following convergence
result for the approximate LMC algorithm.

\begin{corollary}\label{cor:2}
Let $f$ be a twice differentiable function satisfying $m_{R}\bfI_p\preceq \nabla^2 f(\bx) \preceq M\bfI_p$ for every $\bx\in B_{R}(\bx_0)$
and for every $R\in[0,+\infty]$. Let $\epsilon\in(0,1/2)$ be a target precision level. Assume that
for some known value $\mu_{R}$ we have $\int_{B_R(\bx_0)^c} (\|\bx-\bx_0\|_2-R)^4\pi(\bx)\,d\bx\le p^2\mu_{R}^2$
and define $\bar m= m_{2R}\wedge(m_\infty+0.5\gamma)$, $\bar M = M+\gamma$ for some $\gamma \le 2\eps/(p\mu_{R})$.
Set the time horizon $T$ and the step-size $h$ as follows:
\begin{equation}\label{eq:Th}
T = \frac{4\log\big(2/\eps\big) + p\log\big(\bar M/\bar m)\big)}{2\bar m},\qquad
h = \frac{\epsilon^2}{4\bar M^2Tp}.
\end{equation}
Then the output of the $K$-step LMC algorithm (\ref{algoLMC}) applied to the approximation $\bar f$ provided by (\ref{f-gamma}),
with $K=\lceil T/h\rceil$,
% $= O(p^5\eps^{-6}\log^2(p\vee\eps^{-1}))$,
satisfies $\big\|\nu\bP_{\bvartheta}^{K}-\bP_\pi\big\|_{\rm TV}\le \epsilon$.
\end{corollary}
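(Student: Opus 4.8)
The plan is to obtain the two halves of the error separately and glue them with the triangle inequality: the bias introduced by replacing the target $\pi$ with the strongly log-concave surrogate $\bar\pi\propto e^{-\bar f}$ of (\ref{f-gamma}), and the sampling error of the LMC chain run against $\bar f$.

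First I would bound $\|\bfP_{\bar\pi}-\bfP_\pi\|_{\rm TV}$. Since $\bar f-f=\frac{\gamma}{2}(\|\bx-\bx_0\|_2-R)^2\ind_{B^c}(\bx)\ge 0$, the surrogate falls under the scope of Lemma~\ref{lem:lambda}, which yields $\|\bfP_{\bar\pi}-\bfP_\pi\|_{\rm TV}\le\frac12\|\bar f-f\|_{L^2(\pi)}$. Substituting the explicit expression of $\bar f-f$ gives $\|\bar f-f\|_{L^2(\pi)}=\frac{\gamma}{2}\big(\int_{B^c}(\|\bx-\bx_0\|_2-R)^4\,\pi(\bx)\,d\bx\big)^{1/2}$, which by the fourth-moment hypothesis is at most $\frac{\gamma}{2}p\mu_{R}$. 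The restriction $\gamma\le 2\eps/(p\mu_{R})$ then forces $\|\bfP_{\bar\pi}-\bfP_\pi\|_{\rm TV}\le\eps/2$.

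Second, I would invoke the sampling guarantee for the strongly log-concave $\bar f$. Recall from the discussion preceding the corollary that $\bar f$ satisfies (\ref{convex1}) with constants $\bar m=m_{2R}\wedge(m_\infty+0.5\gamma)$ and $\bar M=M+\gamma$, so Corollary~\ref{cor:1}---in its $\alpha=1$ form (equivalently, Theorem~\ref{th:2} with $\alpha=1$, as highlighted in the remark following Corollary~\ref{cor:1})---applies to the target $\bar\pi$. Running it with precision level $\eps/2$ and constants $(\bar m,\bar M)$ reproduces exactly the prescription (\ref{eq:Th}): substituting $(m,M,\eps)\mapsto(\bar m,\bar M,\eps/2)$ in the formula for $T$ in (\ref{eq:h}) gives $T=(4\log(2/\eps)+p\log(\bar M/\bar m))/(2\bar m)$, while setting $\alpha=1$ collapses $(2\alpha-1)/\alpha$ to $1$ and turns the step-size into $(\eps/2)^2/(\bar M^2Tp)=\eps^2/(4\bar M^2Tp)$. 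With $K=\lceil T/h\rceil$ this yields $\|\nu\bP_{\bvartheta}^{K}-\bfP_{\bar\pi}\|_{\rm TV}\le\eps/2$, where $\nu$ is the Gaussian start prescribed by Corollary~\ref{cor:1} (centred at the minimiser of $\bar f$, with covariance $\bar M^{-1}\bfI_p$).

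Finally, the triangle inequality gives $\|\nu\bP_{\bvartheta}^{K}-\bfP_\pi\|_{\rm TV}\le\|\nu\bP_{\bvartheta}^{K}-\bfP_{\bar\pi}\|_{\rm TV}+\|\bfP_{\bar\pi}-\bfP_\pi\|_{\rm TV}\le\eps/2+\eps/2=\eps$. No single step is deep, since the two main ingredients---the strong convexity of $\bar f$ and Lemma~\ref{lem:lambda}---are already in hand; I expect the only delicate point to be the bookkeeping that matches the constants in (\ref{eq:Th}) to the $\alpha=1$ specialisation of (\ref{eq:h}) at precision $\eps/2$, together with the verification that every hypothesis of Corollary~\ref{cor:1} (in particular $\eps/2\in(0,1/2)$, $h\le 1/\bar M$ and $K\ge 1$) is genuinely in force.
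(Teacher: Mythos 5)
Your proof is correct and follows exactly the route the paper intends: it combines Lemma~\ref{lem:lambda} (giving $\|\bfP_{\bar\pi}-\bfP_\pi\|_{\rm TV}\le \frac{\gamma}{4}p\mu_R\le \eps/2$) with Corollary~\ref{cor:1} applied to $\bar f$ at precision $\eps/2$ and, as you note, with the $\alpha=1$ simplification, whose constants match (\ref{eq:Th}) precisely. The paper states this derivation only in one line before the corollary, so your write-up is essentially the same argument with the bookkeeping made explicit.
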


Let us comment this result in the case $R=0$ which concerns nonstrongly log-concave densities. Then the previous result implies that
$K= O(p^5\eps^{-4}\log^2(p\vee\eps^{-1}))$. Clearly, the dependence of $K$ both on the dimension $p$ and on the acceptable error level
$\eps$ gets substantially deteriorated as compared to the strongly log-concave case. Some improvements are possible in specific cases.
First, we can improve the dependence of $K$ on $p$ if we are able to simulate from a distribution $\nu$ that is not too far from
$\bar\pi$ in the sense of $\chi^2$ divergence. More precisely, repeating the arguments of Section~\ref{sec:4.1} we get the following
result: if the initial distribution of the LMC algorithm
satisfies $\chi^2(\nu\|\bar\pi) = O\big((p/\gamma)^\varrho\big)$ for some $\varrho>0$ then one needs at most
$K = O\big(p^3\eps^{-4}\log^2(p/\eps)\big)$ steps of the LMC algorithm for getting an error bounded by $\eps$.
Second, in some cases the dependence of $K$ on $p$ can be further improved
by using a preconditioner and/or by replacing the penalty $\|\bx\|_2^2$ in (\ref{f-gamma}) by $\|\bfM\bx\|_2^2$, where $\bfM$
is a properly chosen $p\times p$ matrix.
%Moreover, we can certainly get a better power of $\eps$ if we replace Lemma~\ref{lem:lambda}
%by a tighter result. Since this improvement leads to more involved formulae, we opted for not formalising it here.

This being said, our intuition is that Corollary~\ref{cor:2} is more helpful in the case of convex functions $f$ that are strongly
convex in a neighbourhood of their minimum point $\btheta^*$. In such a situation, our recommendation is to set $\bx_0=\btheta^*$ and
to choose $R$ by maximising the quantity $\bar m= m_{2R}\wedge(m_\infty+\epsilon/(p\mu_R))$. We showcase this approach in
Section~\ref{sec:experiments} on the example of logistic regression.

Note that the convergence of the MCMC methods for sampling from log-concave densities was also studied in~\citep{brooks1998},
where a strategy for defining the stopping rule is proposed. However, as the computational complexity of that strategy
increases exponentially fast in the dimension $p$, its scope of applicability is limited.

\section{Ozaki discretisation and guarantees for smooth Hessian matrices}\label{Ozaki}

For convex log-densities $f$ which are not only continuously differentiable but also have a smooth Hessian matrix
$\nabla^2 f$, it is possible to take advantage of the Ozaki discretisation \citep{Ozaki92} of the Langevin diffusion which
is more accurate than the Euler discretisation analysed in the foregoing sections. It consists in considering
the diffusion process $\bD^O$ defined by (\ref{D-SDE}) with the drift function
\begin{equation}\label{Ozaki1}
\boldb_t(\bD^O)  = - \sum_{k=0}^{K-1} \big\{\nabla f(\bD_{kh}^O)+\nabla^2 f(\bD_{kh}^O)(\bD_t^O-\bD_{kh}^O)\big\}\ind_{[kh,(k+1)h[}(t),
\end{equation}
where, as previously,  $h$ is the step-size and $K$ is the number of iterations to attain the desired time horizon $T= Kh$. This
expression leads to a diffusion process having linear drift function on each interval $[kh,(k+1)h[$. Such a diffusion admits a closed-form
formula. The resulting MCMC algorithm \citep{StramerTweedie99-2}, hereafter referred to as LMCO algorithm (for Langevin Monte
Carlo with Ozaki discretisation), is defined by  an initial value $\bar\bvartheta^{(0)}$ and the following update rule. For every $k\ge 0$,
we set $\bfH_{k} = \nabla^2 f(\bar\bvartheta^{(k,h)})$, which is an invertible $p\times p$ matrix since $f$ is strongly convex,
and define
\begin{align}
&\bfM_k = \big(\bfI_p-e^{-h\bfH_k}\big)\bfH_k^{-1},\qquad
\bfSigma_k  = \big(\bfI_p-e^{-2h\bfH_k}\big)\bfH_k^{-1}, \\
&\bar\bvartheta^{(k+1,h)} = \bar\bvartheta^{(k,h)}-\bfM_k\nabla f\big(\bar\bvartheta^{(k,h)}\big) +
\bfSigma_k^{1/2}\bxi^{(k+1)},\label{update}
\end{align}
where $\{\bxi^{(k)}:k\in\NN\}$ is a sequence of independent random vectors distributed according to the $\mathcal N_p(0,\bfI_p)$ distribution.
In what follows, for any matrix $\bfM$, $\|\bfM\|$ stands for the spectral norm, that is $\|\bfM\| = \sup_{\|\bv\|_2=1} \|\bfM\bv\|_2$.
\begin{theorem}\label{th:3}
Assume that $p\ge 2$, the function $f:\RR^p\to\RR$ satisfies (\ref{convex1}) and, in addition, the Hessian matrix of $f$ is Lipschitz continuous
with some constant $L_f$: $\|\nabla^2 f(\bx)-\nabla^2 f(\bx')\|\le L_f\|\bx-\bx'\|_2$, for all $\bx,\bx'\in\RR^p$.
Let $\btheta^*$ be the global minimum point of $f$ and $\nu$ be the Gaussian distribution $\mathcal N_p(\btheta^*,M^{-1}\bfI_p)$.
Then, for any step-size $h\le 1/(8M)$ and any time horizon $T=Kh\ge 4/(3M)$, the total variation distance between the target distribution $\bP_\pi$
and the approximation furnished by the LMCO algorithm $\nu\bP_{\bar\bvartheta}^{K}$ with $\bar\bvartheta^{(0)}$ drawn at random from $\nu$
satisfies
\begin{align*}%\label{mainbound1b}
\big\|\nu\bP_{\bar\bvartheta}^{K}-\bP_\pi\big\|_{\rm TV}
    &\le \frac12 \exp\Big\{\frac{p}{4}\log \Big(\frac{M}{m}\Big)-\frac{Tm}{2}\Big\}
        +\Big\{L_f^2 Th^2p^2(0.267M^2hT+0.375)\Big\}^{1/2}\!\!.
\end{align*}
\end{theorem}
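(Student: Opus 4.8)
The plan is to mirror the two-stage decomposition used for Theorem~\ref{th:2}. Writing $T=Kh$ and letting $\nu\PP_L^{T}$ be the law at time $T$ of the Langevin diffusion (\ref{L-SDE}) started from $\nu$, the triangle inequality gives
\begin{equation*}
\big\|\nu\bP_{\bar\bvartheta}^{K}-\bP_\pi\big\|_{\rm TV}
\le \big\|\nu\PP_L^{T}-\bP_\pi\big\|_{\rm TV}+\big\|\nu\bP_{\bar\bvartheta}^{K}-\nu\PP_L^{T}\big\|_{\rm TV}.
\end{equation*}
The first summand is the ergodicity error already handled for the Euler scheme: Lemma~\ref{lem:1} bounds it by $\tfrac12\chi^2(\nu\|\pi)^{1/2}e^{-Tm/2}$, and for the Gaussian start $\nu=\mathcal N_p(\btheta^*,M^{-1}\bfI_p)$ one has $\chi^2(\nu\|\pi)\le (M/m)^{p/2}$, which reproduces the first term $\tfrac12\exp\{\tfrac p4\log(M/m)-\tfrac{Tm}2\}$ verbatim. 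Hence all the novelty sits in the second summand, where the Ozaki drift (\ref{Ozaki1}) replaces the piecewise-constant one.

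For the second summand I would pass to the path space, bound the Kullback--Leibler divergence, and apply Pinsker. Since the time-$T$ marginal of $\bD^O$ is the law of $\bar\bvartheta^{(K,h)}$, the data-processing inequality together with Pinsker yields $\big\|\nu\bP_{\bar\bvartheta}^{K}-\nu\PP_L^{T}\big\|_{\rm TV}\le\{\tfrac12\,\text{\rm KL}\big(\nu\PP_L^{T}\,\big\|\,\nu\PP_{D^O}^{T}\big)\}^{1/2}$, and by (\ref{eq:KL}) the divergence equals $\tfrac14\int_0^{T}\EE\big[\|\nabla f(\bD_t^O)+\boldb_t(\bD^O)\|_2^2\big]\,dt$. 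Substituting (\ref{Ozaki1}), on each cell $[kh,(k+1)h[$ the integrand is the squared first-order Taylor remainder $\|\nabla f(\bD_t^O)-\nabla f(\bD_{kh}^O)-\nabla^2 f(\bD_{kh}^O)(\bD_t^O-\bD_{kh}^O)\|_2^2$. The Lipschitz continuity of $\nabla^2 f$ gives the pointwise estimate $\|\nabla f(\by)-\nabla f(\bx)-\nabla^2 f(\bx)(\by-\bx)\|_2\le\tfrac{L_f}2\|\by-\bx\|_2^2$, so the divergence is at most $\tfrac{L_f^2}{16}\int_0^T\EE\big[\|\bD_t^O-\bD_{\lfloor t/h\rfloor h}^O\|_2^4\big]\,dt$.

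Everything then reduces to a fourth-moment estimate of the within-cell increment of $\bD^O$. On $[kh,(k+1)h[$ the process solves a linear (Ornstein--Uhlenbeck) SDE with frozen Hessian $\bfH_k=\nabla^2 f(\bD_{kh}^O)$, so $\bD_{kh+s}^O-\bD_{kh}^O=-(\bfI_p-e^{-s\bfH_k})\bfH_k^{-1}\nabla f(\bD_{kh}^O)+\sqrt2\int_0^s e^{-(s-u)\bfH_k}\,d\bW_u$. Using $\|(\bfI_p-e^{-s\bfH_k})\bfH_k^{-1}\|\le s$ and $\nabla f(\btheta^*)=0$, the deterministic part has norm at most $sM\|\bD_{kh}^O-\btheta^*\|_2$, while the Gaussian part has covariance $(\bfI_p-e^{-2s\bfH_k})\bfH_k^{-1}\preceq 2s\bfI_p$, so its second and fourth moments are of order $sp$ and $s^2p^2$. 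Expanding $\EE\|(\text{drift})+(\text{Gaussian})\|_2^4$ and using that the Gaussian part is centred and independent of $\mathcal F_{kh}$, the leading term is the Gaussian fourth moment ($\sim s^2p^2$) while the cross terms are of higher order in $s$ and carry the factor $\EE\|\bD_{kh}^O-\btheta^*\|_2^2$. Integrating $s$ over $[0,h]$ and summing the $K$ cells produces the two coefficients inside the bracket, the pure-Gaussian part giving $0.375$ and the drift cross terms giving $0.267\,M^2hT$.

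The main obstacle is precisely the control of those cross terms, which requires a uniform-in-$k$ second-moment bound $\EE\|\bar\bvartheta^{(k,h)}-\btheta^*\|_2^2\lesssim p/m$. This is an Ozaki analogue of Proposition~\ref{prop:1}: one must verify that the deterministic map $\bx\mapsto \bx-\bfM(\bx)\nabla f(\bx)$, with $\bfM(\bx)=(\bfI_p-e^{-h\nabla^2 f(\bx)})(\nabla^2 f(\bx))^{-1}$, contracts towards $\btheta^*$, and that the injected noise $\bfSigma_k^{1/2}\bxi^{(k+1)}$ has variance of order $hp$. This is where the hypotheses $h\le 1/(8M)$ and $T\ge 4/(3M)$ enter: they confine the spectra of $\bfH_k$, $\bfM_k$ and $\bfSigma_k$ to a regime where the matrix exponentials are safely expanded and the contraction factor stays below one. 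Assembling this moment bound with the Gaussian fourth-moment computation, and tracking the numerical constants through the integration in $s$ and the summation over cells, yields the stated second term and completes the proof.
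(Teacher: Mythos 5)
Your overall skeleton coincides with the paper's proof step by step: the same triangle-inequality decomposition with Lemma~\ref{lem:1} and $\chi^2(\nu\|\pi)\le (M/m)^{p/2}$ for the first term, the Girsanov identity (\ref{eq:KL}) plus Pinsker for the second, the Taylor-remainder bound $\|\nabla f(\bD_t^O)+\boldb_t(\bD^O)\|_2\le \tfrac{L_f}{2}\|\bD_t^O-\bD_{kh}^O\|_2^2$, and the conditional Ornstein--Uhlenbeck analysis with $\|(\bfI_p-e^{-s\bfH_k})\bfH_k^{-1}\|\le s$ and $\bfSigma_k\preceq 2s\bfI_p$. The genuine gap is in how you close the resulting fourth-moment estimate. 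First, expanding $\bfE\big[\|\ba+\boldb\|_2^4\,\big|\,\bD_{kh}^O\big]$ with $\ba$ the frozen drift part and $\boldb$ the centred Gaussian part does not only produce the pure-Gaussian term of order $s^2p^2$ and cross terms weighted by second moments; it also produces the pure-drift term $\|\ba\|_2^4\le s^4\|\nabla f(\bD_{kh}^O)\|_2^4$, so you need a \emph{fourth} moment of the gradient (equivalently of the position), which your proposed uniform-in-$k$ second-moment bound $\bfE\|\bar\bvartheta^{(k,h)}-\btheta^*\|_2^2\lesssim p/m$ does not supply. Second, and more fundamentally, even granting an Ozaki analogue of Proposition~\ref{prop:1} with $\bfE\|\bar\bvartheta^{(k,h)}-\btheta^*\|_2^2\lesssim p/m$ and $\bfE\|\bar\bvartheta^{(k,h)}-\btheta^*\|_2^4\lesssim (p/m)^2$, the per-step gradient moments you obtain are of size $M^2p/m$ and $M^4p^2/m^2$, and the assembled drift contribution to the KL divergence is then of order $L_f^2Th^4M^4p^2/m^2$, which exceeds the theorem's leading term $0.267\,L_f^2T^2h^3M^2p^2$ (inside the square root, after Pinsker) by a factor proportional to $(M/m)^2$. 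Since the stated constants $0.267$ and $0.375$ are free of $m$, your route cannot reproduce the claimed inequality when the condition number is large.

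What the paper does instead is Lemma~\ref{lem:4}: an \emph{aggregate} bound $\bfE\big[\big(\sum_{k=0}^{K-1}\|\nabla f(\bar\bvartheta^{(k,h)})\|_2^2\big)^2\big]\le \tfrac{32}{3}(TMp/h)^2$, proved by a descent/telescoping argument in the spirit of the proof of Proposition~\ref{prop:1}: Lemma~\ref{lem:5} gives $f^{(k+1)}\le f^{(k)}-c\,h\|\nabla f^{(k)}\|_2^2+\text{noise}$, one sums over $k$, squares, and controls the noise through the $\chi^2_{Kp}$ moment of $\sum_k\|\bxi^{(k+1)}\|_2^2$ and the martingale $\sum_k \bv_k^\top\bxi^{(k+1)}$; the initial law $\mathcal N_p(\btheta^*,M^{-1}\bfI_p)$ enters only via $\bfE[f^{(0)}-f^*]\le p/2$. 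This yields gradient sums of order $Mp$ per step on average, with no $1/m$, and it is here---not in a contraction estimate for the map $\bx\mapsto \bx-\bfM(\bx)\nabla f(\bx)$, whose contractivity toward $\btheta^*$ is itself delicate because $\bfM(\bx)$ and the averaged Hessian need not commute---that the hypotheses $p\ge 2$, $h\le 1/(8M)$ and $T\ge 4/(3M)$ are consumed (confining the spectra of $\bfM_k$, $\bfSigma_k$ and ensuring $Kp\ge 16$). Plugging Lemma~\ref{lem:4} into the cell-wise bound gives $\text{KL}\le 0.534\,h^3(L_fTMp)^2+0.75\,T(L_fhp)^2$, and Pinsker then yields exactly the stated second term.
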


The proof of this theorem is deferred to Section~\ref{sec:proofs2}. Let us state now a direct consequence of the last theorem,
which provides sufficient conditions on the number of steps for the LMCO algorithm to achieve a prescribed precision level $\epsilon$.
The proof of the corollary is trivial and, therefore, is omitted.

\begin{corollary}\label{cor:4}
Let $f$ satisfy (\ref{convex1}) with a Hessian that is Lipschitz-continuous with constant $L_f$.
For every $\eps\in(0,1/2)$, if the time horizon $T$ and the step-size $h$ are chosen so that
$$
T\ge \frac{4\log(1/\epsilon)+p\log(M/m)}{2m},\qquad h^{-1}\ge (6L_fMTp\eps^{-1})^{2/3}\bigvee (1.25\sqrt{T} L_fp\eps^{-1})\bigvee (8M),
$$
then the distribution of the outcome of the LMCO algorithm with $K= [T/h]$ steps fulfils
$\big\|\nu\bP_{\bar\bvartheta}^{K}-\bP_\pi\big\|_{\rm TV}\le \epsilon$.
\end{corollary}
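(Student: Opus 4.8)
The plan is to reproduce, with the Ozaki drift in place of the Euler one, the two-step scheme underlying Theorem~\ref{th:2}. I would start from the triangle inequality
$$\big\|\nu\bP_{\bar\bvartheta}^{K}-\bP_\pi\big\|_{\rm TV}\le \big\|\nu\bfP^{T}_{\bL}-\bP_\pi\big\|_{\rm TV}+\big\|\nu\bP_{\bar\bvartheta}^{K}-\nu\bfP^{T}_{\bL}\big\|_{\rm TV},\qquad T=Kh,$$
and dispatch the first summand exactly as for Theorem~\ref{th:2}: with $\nu=\mathcal N_p(\btheta^*,M^{-1}\bfI_p)$, the strong convexity and smoothness in (\ref{convex1}) sandwich $\pi$ between two Gaussians, which gives $\chi^2(\nu\|\pi)^{1/2}\le (M/m)^{p/4}$, and Lemma~\ref{lem:1} then yields the first term $\tfrac12\exp\{\tfrac p4\log(M/m)-Tm/2\}$ verbatim. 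All the new work is therefore concentrated in the discretisation error $\|\nu\bP_{\bar\bvartheta}^{K}-\nu\bfP^{T}_{\bL}\|_{\rm TV}$.

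For the latter I would first observe that integrating the linear SDE (\ref{D-SDE})--(\ref{Ozaki1}) over one interval $[kh,(k+1)h]$ reproduces exactly the update (\ref{update}) with the stated $\bfM_k$ and $\bfSigma_k$, so the grid values of $\bD^O$ have the same joint law as the LMCO iterates and $\nu\bP_{\bar\bvartheta}^{K}$ is the time-$T$ marginal of $\bD^O$. By Pinsker and the data-processing inequality it then suffices to bound $\text{KL}(\nu\PP_L^{T}\|\nu\PP_D^{T})$ for the Ozaki process. The drift (\ref{Ozaki1}) has linear growth (because $\|\nabla^2 f\|\le M$ and $\nabla f$ is Lipschitz), so the Girsanov identity (\ref{eq:KL}), which remains valid for a common random initial law $\nu$, applies and represents this divergence as $\tfrac14\int_0^{T}\bfE\big[\|\nabla f(\bD^O_t)+\boldb_t(\bD^O)\|_2^2\big]\,dt$.

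The key simplification is that on $[kh,(k+1)h)$ the vector $\nabla f(\bD^O_t)+\boldb_t(\bD^O)$ is precisely the first-order Taylor remainder $\nabla f(\bD^O_t)-\nabla f(\bD^O_{kh})-\nabla^2 f(\bD^O_{kh})(\bD^O_t-\bD^O_{kh})$, whose norm the Hessian-Lipschitz assumption bounds by $\tfrac{L_f}{2}\|\bD^O_t-\bD^O_{kh}\|_2^2$; hence the whole divergence is at most $\tfrac{L_f^2}{16}\int_0^{T}\bfE\big[\|\bD^O_t-\bD^O_{\lfloor t/h\rfloor h}\|_2^4\big]\,dt$. To evaluate the fourth moment I would use that, conditionally on $\bD^O_{kh}$, the increment $\bu_\tau=\bD^O_{kh+\tau}-\bD^O_{kh}$ solves the Ornstein--Uhlenbeck equation $d\bu_\tau=-\{\nabla f(\bD^O_{kh})+\bfH_k\bu_\tau\}\,d\tau+\sqrt2\,d\bW$, so it is Gaussian with mean $-\bfH_k^{-1}(\bfI_p-e^{-\tau\bfH_k})\nabla f(\bD^O_{kh})$ and covariance $\bfH_k^{-1}(\bfI_p-e^{-2\tau\bfH_k})$. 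Inserting this into the explicit fourth-moment formula for a Gaussian vector and using $m\bfI_p\preceq\bfH_k\preceq M\bfI_p$ (so the mean-operator eigenvalues are $\le\tau$ and the covariance eigenvalues are $\le 2\tau$), the covariance-only contribution integrates and sums over the $K$ intervals to a term of order $L_f^2Th^2p^2$ -- this is the source of the $0.375$ constant -- while every term involving the mean carries a factor $M^2\bfE[\|\bD^O_{kh}-\btheta^*\|_2^2]$.

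The last ingredient, and the main obstacle, is to control $\sum_{k<K}\bfE[\|\bD^O_{kh}-\btheta^*\|_2^2]$ without reintroducing a $1/m$ factor that would destroy the displayed constants; the stationary bound of Proposition~\ref{prop:1} is therefore unusable here. Instead I would derive a purely diffusive estimate: applying It\^o to $\|\bD^O_t-\btheta^*\|_2^2$, the stochastic integral has zero mean, the frozen drift is handled using $h\le 1/(8M)$, and the $\sqrt2\,d\bW$ quadratic variation contributes $2p$, giving $\bfE[\|\bD^O_t-\btheta^*\|_2^2]$ of order $p/M+pt$. Summing $\sum_{k<K}(p/M+pkh)$, which is of order $pT^2/h$ once $T\ge 4/(3M)$, turns the mean contribution into the $0.267\,M^2hT$ term and makes the $1/m$-free dependence transparent. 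Collecting the two contributions under the square root and applying Pinsker produces the second term of the bound. The genuinely delicate part is not conceptual but the careful tracking of the numerical constants $4/15$ and $3/8$ through the Gaussian fourth-moment integral, for which the restrictions $h\le 1/(8M)$ and $T\ge 4/(3M)$ are exactly calibrated to absorb the higher-order corrections in the $\tau$-expansions of the mean and covariance.
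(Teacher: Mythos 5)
Your scaffolding coincides with the paper's proof of Theorem~\ref{th:3} (which is then combined with a plug-in check to get the corollary): triangle inequality, Lemma~\ref{lem:1} plus Lemma~\ref{lem:2} for the mixing term with $\chi^2(\nu\|\pi)^{1/2}\le (M/m)^{p/4}$, the Girsanov identity (\ref{eq:KL}) with Pinsker for the discretisation term, the Taylor-remainder bound $\|\nabla f(\bD_t^O)+\boldb_t(\bD^O)\|_2\le \tfrac{L_f}{2}\|\bD_t^O-\bD_{kh}^O\|_2^2$, and the explicit conditional Gaussian law of the Ornstein--Uhlenbeck increment. The gap is in your treatment of the mean contribution, and it is twofold. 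First, the conditional mean enters the fourth moment through $\|\mu\|_2^4=(t-kh)^4\|\nabla f(\bD_{kh}^O)\|_2^4$ (besides the cross terms $\|\mu\|_2^2\tr\bfSigma$ and $\mu^\top\bfSigma\mu$), so you need the \emph{fourth} moment $\bfE\big[\|\nabla f(\bD_{kh}^O)\|_2^4\big]$; your claim that ``every term involving the mean carries a factor $M^2\bfE[\|\bD^O_{kh}-\btheta^*\|_2^2]$'' is off by two powers, and your It\^o computation only delivers second moments. Second, even after repairing this via $\|\nabla f(\bx)\|_2\le M\|\bx-\btheta^*\|_2$ and a fourth-moment analogue of your estimate, the contraction-free diffusive bound $\bfE[\|\bD_t^O-\btheta^*\|_2^2]\lesssim p/M+pt$ grows linearly in $t$, which yields $\sum_{k<K}\bfE\big[\|\nabla f(\bD_{kh}^O)\|_2^4\big]\lesssim M^4p^2T^3/h$ and hence a mean contribution of order $L_f^2M^4p^2h^4T^3$ to the Kullback--Leibler divergence. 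The paper's corresponding term is $0.534\,h^3(L_fTMp)^2$; yours exceeds it by a factor of order $M^2hT/32$, which is unbounded in the corollary's regime, since $T\ge \{4\log(1/\eps)+p\log(M/m)\}/(2m)$ forces $MT\gtrsim p(M/m)\log(M/m)$. Consequently your route cannot recover the constants $0.267$ and $0.375$, and with the corollary's prescribed $h$ it does not yield a residual error $\le\eps$.

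The missing idea is precisely Lemma~\ref{lem:4} of the paper: an $m$-free, descent-type bound $\bfE\big[\big(\sum_{k=0}^{K-1}\|\nabla f(\bar\bvartheta^{(k,h)})\|_2^2\big)^2\big]\le \tfrac{32}{3}(TMp/h)^2$, obtained by telescoping the one-step inequality $f^{(k+1)}\le f^{(k)}-c\,h\|\nabla f^{(k)}\|_2^2+\text{noise}$ (the analogue, for LMCO iterates, of summing the descent inequality behind Theorem~\ref{th:1}) and then squaring, the martingale structure of $\sum_k\bv_k^\top\bxi^{(k+1)}$ killing the cross terms. Since $\sum_k a_k^2\le(\sum_k a_k)^2$ for $a_k\ge 0$, this dominates $\sum_k\bfE[\|\nabla f^{(k)}\|_2^4]$ by a quantity scaling as $T^2M^2p^2/h^2$ rather than your $M^4p^2T^3/h$; this is also exactly where the hypotheses $p\ge2$, $h\le 1/(8M)$ and $T\ge 4/(3M)$ are consumed. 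Your instinct to avoid Proposition~\ref{prop:1} (and its $1/m$ factors) was right, but the correct substitute is this telescoping argument, not a pure heat-growth estimate. Finally, note that the statement under review is the corollary, whose actual content (the paper calls it trivial) is the verification you never perform: the prescribed $T$ makes the first term of Theorem~\ref{th:3} at most $\eps/2$, and the three lower bounds on $h^{-1}$ give $0.267\,L_f^2M^2h^3T^2p^2\le (0.267/36)\eps^2$ and $0.375\,L_f^2Th^2p^2\le(0.375/1.5625)\eps^2$, whose sum is about $0.2474\,\eps^2\le\eps^2/4$, so the second term is also at most $\eps/2$.
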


This corollary provides simple recommendation for the choice of the parameters $h$ and $T$ in the
LMCO algorithm. It also ensures that for the recommended choice of the parameters, it is sufficient to
perform $K = O\big((p+\log(1/\eps))^{3/2}p\eps^{-1}\big)$ number of steps of the LMCO algorithm in order to
reach the desired precision level $\eps$. This number is much smaller than that provided earlier by Corollary~\ref{cor:1},
which was of order $O\big((p+\log(1/\eps))^2p\eps^{-2}\big)$. However, one should pay attention to the fact that
each iteration of the LMCO requires computing the exponential of the Hessian of $f$ at the current state and,
therefore, the computational complexity of each iteration is usually much larger for the LMCO as compared to the LMC
($O(p^3)$ versus $O(p)$). This implies that the LMCO would most likely be preferable to the LMC only in situations where $p$
is not too large, and the required precision level $\eps$ is very small. For instance, the arguments
of this paragraph advocate for using the LMCO instead of the LMC when $\eps = o(p^{-3/2})$.

This being said, it is worth noting that for some functions $f$ the cost of performing a singular values
decomposition on the Hessian of $f$, which is the typical way of computing the matrix exponential, might be
much smaller than the aforementioned worst-case complexity $O(p^3)$. This is, in particular, the case for
the first example considered in the next section. One can also approximate the matrix exponentials by
matrix polynomials. For second-order polynomials, this amounts to replacing the updates (\ref{update}) by
\begin{align}
&\bar\bvartheta^{(k+1,h)} = \bar\bvartheta^{(k,h)}-h\Big(\bfI_p-\frac12h\bfH_k\Big)\nabla f\big(\bar\bvartheta^{(k,h)}\big) +
\sqrt{2h}\Big(\bfI_p-\frac12h\bfH_k\Big)\bxi^{(k+1)}\label{update1}.
\end{align}
Establishing guarantees for such a modified LMCO is out of scope of the present work. We will limit ourselves to
an empirical assessment of the quality of this approximation on the example of logistic regression considered in
Section~\ref{sec:experiments}.

To close this section, let us remark that in the case a warm start is available, the number of iterations for the
LMCO algorithm to reach the precision $\epsilon$ may be reduced to $O^*(p\epsilon^{-1})$. Indeed, if the $\chi^2$
divergence between the initial distribution and the target is bounded by a quantity independent of $p$, or increasing
not faster than a polynomial in $p$, then the time horizon can be chosen as $O^*(1)$ and the choice of $h$ provided by
Corollary~\ref{cor:4} leads to a number of iterations $K$ satisfying $K = O^*(p\epsilon^{-1})$.

\begin{table}
\caption{\label{tab:1}Number of iterations and running times in Example 1}
%\end{table}
%\begin{table}
\centering%
\fbox{%
\begin{tabular}{l|rrrrrrrr}
\toprule
 & $p = 4 $ & $p = 8$ & $p = 12$ & $p = 16$ & $p = 20$ & $p=30$ & $p=40$ & $p=60$\\
 \midrule
 & \multicolumn{8}{c}{Approximate number of iterates, $K$ (to be multiplied by $10^3$)}\\
\midrule
LMC  & $28$ & $87$ & $184$  & $329$  & $532$ & $1350$ & $2728$ & $7741$\\
LMCO & $1$  & $3$  & $5.4$  &   $9$  & $13.6$& $30$   & $54.9$ & $133$ \\
 \midrule
 & \multicolumn{8}{c}{Running times (in seconds) for $N=10^3$ samples}\\
\midrule
LMC  & $3.44$  & $16.6$ & $54.1$   & $123$  & $238$ & $876$  & $2488$ & $9789$\\
LMCO & $0.18$  & $0.70$ & $ 1.78$  & $3.5$  & $6.4$ & $20.4$  & $53.9$ & $189.1$\\
\bottomrule
\end{tabular}}
\end{table}

\section{Numerical experiments}\label{sec:experiments}

To illustrate the results established in the previous sections, we carried out some experiments on
synthetic data. The experiments were conducted on a HP Elitebook PC with the following configuration:
Intel (R) Core (TM) i7-3687U with 2.6 GHz CPU and 16 GB of RAM. The code, written in Matlab, does not
use parallelisation. We considered two examples; both satisfy all the assumptions required in previous sections.
This implies that Corollaries~\ref{cor:1} and \ref{cor:4} apply and guarantee that the choices of
$h$ and $T$ suggested by these corollaries allow us to generate random vectors having a distribution
which is within a prescribed distance $\epsilon$, in total variation, of the target distribution.

\subsection*{Example 1: Gaussian mixture}
The goal of this first experiment is merely to show on a simple example the validity of our theoretical findings.
That is, we check below that the LMC and the LMCO algorithms with the values of time horizon $T$ and step-size
$h$ recommended by Corollaries~\ref{cor:1} and \ref{cor:4}  produce samples distributed approximately as the target
distribution within a reasonable running time. To this end, we consider the simple task of sampling from the density
$\pi$ defined by
\begin{equation}\label{example:1}
\pi(\bx) = \frac1{2(2\pi)^{p/2}}\bigg(e^{-\|\bx-\ba\|_2^2/2}+e^{-\|\bx+\ba\|_2^2/2}\bigg),\qquad \bx\in\RR^p,
\end{equation}
where $\ba\in\RR^p$ is a given vector. This density $\pi$ represents the mixture with equal weights of two
Gaussian densities $\mathcal N(\ba, \bfI_p)$ and $\mathcal N(-\ba, \bfI_p)$. The function $f$, its gradient and
its Hessian are given by
\begin{align*}
f(\bx) &= \frac12\|\bx-\ba\|_2^2 - \log\big(1+e^{-2\bx^\top\ba}\big),\\
\nabla f(\bx) &= \bx-\ba +2\ba \big(1+e^{2\bx^\top\ba}\big)^{-1},\\
\nabla^2f(\bx) &= \bfI_p - 4\ba\,\ba^\top e^{2\bx^\top\ba}\big(1+e^{2\bx^\top\ba}\big)^{-2}.
\end{align*}
Using the fact that $0\le 4e^{2\bx^\top\ba}\big(1+e^{2\bx^\top\ba}\big)^{-2}\le 1$, we infer that
for $\|\ba\|_2<1$, the function $f$ is strongly convex and satisfies (\ref{convex1}) with
$m = 1-\|\ba\|_2^2$ and $M = 1$. Furthermore, the Hessian matrix is Lipschitz continuous with the constant
$L_f = \frac12\|\ba\|_2^3$. Hence, both algorithms explored in the previous sections, LMC and LMCO, can be
used for sampling from the density $\pi$ defined by (\ref{example:1}). Note also that one can sample directly
from $\pi$ by drawing independently at random a Bernoulli$(1/2)$ random variable $Y$ and a standard Gaussian
vector $\bZ\sim \mathcal N(0,\bfI_p)$ and by computing $\bX = Y\cdot (\bZ-\ba)+(1-Y)\cdot(\bZ+\ba)$. The density
of the random vector $\bX$ defined in such a way coincides with $\pi$. One can check that the unique minimum of
$f$ is achieved at $\btheta^* = c^*\cdot\ba$, where $c^*$ is the unique solution of the equation
$c = 1-2(1+e^{2c\|\ba\|_2^2})^{-1}$. Choosing $\ba$ so that $\|\ba\|_2^2=1/2$, we get $\btheta^* = 0$.

\begin{figure}
\includegraphics[width = \textwidth]{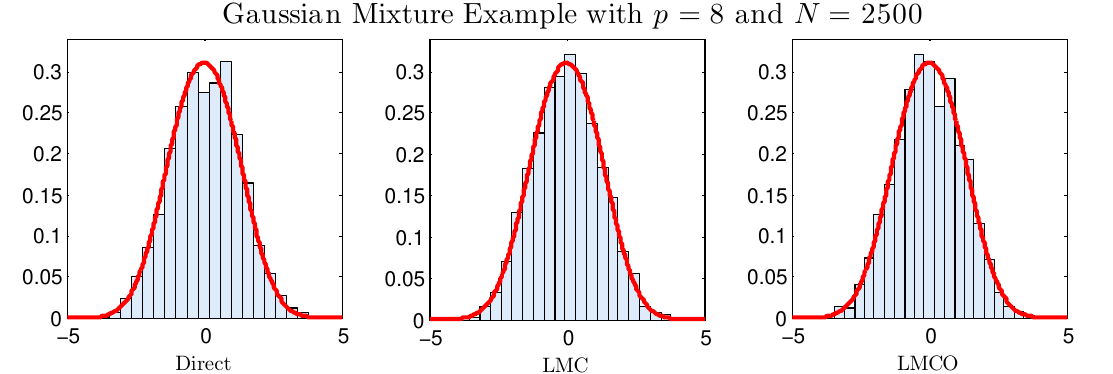}
\vspace{-10pt}
\caption{Histograms of the 1D projections of the samples computed using the Direct (left plot), LMC (middle plot) and
LMCO (right plot) algorithms in the example of a Gaussian mixture (\ref{example:1}). The dimension is $p=8$, the target
precision is $\epsilon=0.1$ and $N=2500$ independent samples were drawn according to each of three methods. The result
shows that both the LMC and the LMCO are very accurate, nearly as accurate as the direct method. \label{fig:1}}
\end{figure}

In the experiment depicted in Figure~\ref{fig:1} (see also Table~\ref{tab:1}), we chose $\epsilon=0.1$ and, for
dimensions $p\in \{4,8,12,16,20,30,40,60\}$, generated vectors using, respectively, the direct method, the
LMC algorithm and the LMCO algorithm. Let $\bvartheta^{{\rm direct},i}$, $\bvartheta^{{\rm LMC},i}$ and
$\bvartheta^{{\rm LMCO},i}$, $i=1,\ldots,N$, be the vectors obtained after  $N$ repetitions of this experiment.
In Figure~\ref{fig:1}, we plotted the histograms of the one-dimensional projections $\bv^\top\bvartheta^{{\rm direct},i}$,
$\bv^\top\bvartheta^{{\rm LMC},i}$ and $\bv^\top\bvartheta^{{\rm LMCO},i}$ of the sampled vectors onto the direction
$\bv = \ba/\|\ba\|_2$ in $\RR^p$ determined by the vector $\ba$. In order to provide a qualitative measure of
accuracy of the obtained samples, we added to each histogram the curve of the true density. The latter can be
computed analytically and is equal to a mixture with equal weights of two one-dimensional Gaussian densities.
The result shows that both the LMC and the LMCO are very accurate, nearly as accurate as the direct method.

To illustrate the dependence on the dimension $p$ of the computational complexity of the proposed sampling
strategies, we report in Table~\ref{tab:1} the number of iterations and the overall running times for generating $N=10^3$
independent samples by the LMC and the LMCO for the target specified by (\ref{example:1}), when the dimension $p$
varies in $\{4,8,12,16,20,30,40,60\}$. One may observe that the computational time is much smaller for the LMCO
than for the LMC algorithm, which is mainly explained by the fact that the singular vectors of the Hessian
of the function $f$, in the example under consideration, do not depend on the value $\bx$ at which the
Hessian is computed.

This example confirms our theoretical findings in that it shows that (a) the samples drawn from the LMC and the LMCO
algorithms with the parameters $T$ and $h$ suggested by theoretical considerations have distributions that are very close
to the target distribution and that (b) the running-times for these algorithms remain reasonable even for moderately large
values of dimension $p$.

\subsection*{Example 2: Binary logistic regression}

Let us consider the problem of logistic regression, in which an iid sample $\{(\bX_i,Y_i)\}_{i=1,\ldots,n}$ is observed, with
features $\bX_i\in\RR^p$ and binary labels $Y_i\in\{0,1\}$. The goal is to estimate the conditional distribution of $Y_1$
given $\bX_1$, which amounts to estimating the regression function $r(\bx) = \bfE[Y_1|\bX_1=\bx] = \bfP(Y_1=1|\bX_1=\bx)$.
In the model of logistic regression, the regression function $r(\bx)$ is approximated by a logistic function of the form
$r(\btheta,\bx) = e^{\btheta^\top\bx}/(1+e^{\btheta^\top\bx})$. The Bayesian approach for estimating the parameter $\btheta$
relies on introducing a prior probability density on $\btheta$, $\pi_0(\cdot)$, and by computing the posterior density $\pi(\cdot)$.
Choosing a Gaussian prior $\pi_0$ with zero mean and covariance matrix proportional to the inverse of the Gram matrix
$\bfSigma_\bfX=\frac1n\sum_{i=1}^n \bX_i\bX_i^\top$,  the posterior density takes the form
\begin{equation}\label{pi:logit}
\pi(\btheta)\propto \exp\Big\{-\bY^\top \bfX\btheta -\sum_{i=1}^n \log(1+e^{-\btheta^\top\bX_i}) -
\frac{\lambda}{2}\|\bfSigma_\bfX^{1/2}\btheta\|_2^2\Big\},
\end{equation}
where $\bY = (Y_1,\ldots,Y_n)^\top \in \{0,1\}^n$ and $\bfX$ is the $n\times p$ matrix having the feature $\bX_i$ as $i^{\rm th}$ row.
The first two terms in the exponential correspond to the log-likelihood of the logistic model, whereas the last term comes from
the log-density of the prior and can be seen as a penalty term. The parameter $\lambda>0$ is usually specified by the practitioner.
Many authors have studied this model from a Bayesian perspective, see for instance \citep{holmes2006,roy2012}, and it seems that
there is no compelling alternative to the MCMC algorithms for computing the Bayesian estimators in this model. Furthermore, even for the
MCMC approach, although geometric ergodicity under some strong assumptions is established, there is no theoretically justified rule for
assessing the convergence and, especially, ensuring that the convergence is achieved in polynomial time. Such guarantees are provided
by our results, when either the LMC or the LMCO is used.

If we define the function $f$ by
\begin{equation}
f(\btheta)= \bY^\top \bfX\btheta + \sum_{i=1}^n \log(1+e^{-\btheta^\top\bX_i}) + \frac{\lambda}{2}\|\bfSigma_\bfX^{1/2}\btheta\|_2^2,
\end{equation}
we get the setting described in the Introduction. It is useful here to apply the preconditioning technique of Section~\ref{sec:precond}
with the preconditioner $\bfA = \bfSigma_\bfX^{-1/2}$. Thus, the LMC and the LMCO can be used with the function $f$ replaced by
$g(\btheta) = f(\bfA\btheta)$. One checks that $g$ and $f$ are infinitely differentiable and
\begin{align*}
\nabla f(\btheta)   &=  \bfX^\top\bY - \sum_{i=1}^n \frac{\bX_i}{1+e^{\btheta^\top\bX_i}} + \lambda\bfSigma_\bfX\btheta,
\quad\nabla^2 f(\btheta) =  \sum_{i=1}^n \frac{e^{\btheta^\top\bX_i}}{(1+e^{\btheta^\top\bX_i})^{2}}\bX_i\bX_i^\top + \lambda\bfSigma_\bfX.
\end{align*}
For the function $g$, since $\nabla^2g(\btheta) = \bfA\nabla^2f(\bfA\btheta)\bfA$, we can infer from these relations that
(\ref{convex1}) holds with $m_\bfA = \lambda$ and $M_\bfA = \lambda+0.25 n $. Note here that if we do not use any preconditioner, the
constants $m$ and $M$ would be given by $m=\lambda\cdot\nu_{\min}(\bfSigma_\bfX)$  and $M = (\lambda+0.25n)\cdot\nu_{\max}(\bfSigma_\bfX)$,
where $\nu_{\min}(\bfSigma)$ and $\nu_{\max}(\bfSigma)$ are respectively the smallest and the largest eigenvalues of $\bfSigma$. This
implies that the ratio $\nu_{\max}(\bfSigma_\bfX)/\nu_{\min}(\bfSigma_\bfX)$ quantifies the gain of efficiency obtained by preconditioning.
This ratio might be large especially when $p$ is large and the covariates are strongly correlated.

Furthermore, $\nabla^2 g$ is Lipschitz with a constant $L_g$ provided by the following formula (the proof
of which is postponed to Section~\ref{sec:proofs2}):
\begin{equation}\label{lf:logit}
L_g = 0.1\Big\|\sum_{i=1}^n \|\bfA\bX_i\|_2 \bfA\bX_i\bX_i^\top\bfA \Big\|\le 0.1n\max_{i=1,\ldots,n} \|\bfSigma_\bfX^{-1/2}\bX_i\|_2.
\end{equation}
%Thus, both the LMC and the LMCO algorithms can be used for sampling from the target density (\ref{pi:logit}).

In our second  experiment, for a set of values of $p$ and $n$, we randomly drew $n$ iid samples $(\bX_i, Y_i)$ according to
the following data generating device. The features $\bX_i$ were drawn from a Rademacher distribution (\textit{i.e.}, each coordinate takes
the values $\pm 1$ with probability $1/2$), and then renormalised to have an Euclidean norm equal to one. Each label $Y_i$, given $\bX_i = \bx$, 
was drawn from a Bernoulli distribution with parameter $r({\btheta^{\rm true}},\bx)$. The true vector $\btheta^{\rm true}$ was set to
${\bf 1}_p = (1,1,\ldots,1)^\top$. For each value of $p$ and $n$, we generated $100$ samples $(\bfX,\bY)$. For each sample, we computed the
MLE using the gradient descent as described in Theorem~\ref{th:1} with a precision level $\epsilon = 10^{-6}$.
%We also computed the Bayesian
%posterior mean and median based on 200 random samples generated by the LMC and the LMCO algorithms with $\epsilon = 0.1$. The step-size $h$ and the time
%horizon $T$ employed in this experiment are those recommended by our theoretical results.
Following the recommendation of \citep{Hanson}, the parameter $\lambda$ was set to $3p/\pi^2$. We carried out two sub-experiments
with well specified distinct purposes: to empirically assess the gain obtained by applying the trick of strong-convexification described in
Subsection~\ref{sec:nonstrong} and to evaluate the loss of accuracy caused by applying to the LMCO algorithm the second-order approximation
(\ref{update1}).
\begin{table}
\caption{\label{tab:2} Example 2 (Binary logistic regression): Number of iterates using the LMC algorithm ($K$) and its modified version as described
in Subsection~\ref{sec:nonstrong} ($K'$).}
\centering
\fbox{%
\begin{tabular}{ll|rrrrr}
\toprule
 & & \multicolumn{5}{c}{$\epsilon = 0.1$}\\
 & & $n = 500 $ & $n = 1000 $ & $n = 2000 $ & $n = 4000 $ & $n = 8000$ \\
\midrule
$p=2$  & $K$     &    $0.065\times 10^7$ &   $0.137\times 10^7$ &   $0.286\times 10^7$ &   $0.596\times 10^7$ &   $1.236\times 10^7$\\
       & $K'$    &    $2.823\times 10^2$ &   $0.688\times 10^2$ &   $0.230\times 10^2$ &   $0.089\times 10^2$ &   $0.039\times 10^2$\\
\midrule
$p=5$  & $K$     &    $0.358\times 10^6$ &   $0.751\times 10^6$ &   $1.568\times 10^6$ &   $3.257\times 10^6$ &   $6.742\times 10^6$\\
       & $K'$    &    $4.207\times 10^4$ &   $0.222\times 10^4$ &   $0.029\times 10^4$ &   $0.007\times 10^4$ &   $0.003\times 10^4$\\
\midrule
$p=20$ & $K$     &    $0.135\times 10^6$ &   $0.279\times 10^6$ &   $0.579\times 10^6$ &   $1.201\times 10^6$ &   $2.481\times 10^6$\\
       & $K'$    &    $0.121\times 10^6$ &   $0.250\times 10^6$ &   $0.519\times 10^6$ &   $1.075\times 10^6$ &   $2.222\times 10^6$\\
\bottomrule
 & & \multicolumn{5}{c}{$\epsilon = 0.01$}\\
 & & $n = 500 $ & $n = 1000 $ & $n = 2000 $ & $n = 4000 $ & $n = 8000$ \\
\midrule
$p=2$  & $K$     &    $0.151\times 10^9$ &   $0.313\times 10^9$ &   $0.645\times 10^9$ &   $1.324\times 10^9$ &   $2.714\times 10^9$\\
       & $K'$    &    $1.529\times 10^5$ &   $0.248\times 10^5$ &   $0.077\times 10^5$ &   $0.028\times 10^5$ &   $0.011\times 10^5$\\
\midrule
$p=5$  & $K$     &    $0.075\times 10^9$ &   $0.155\times 10^9$ &   $0.320\times 10^9$ &   $0.657\times 10^9$ &   $1.345\times 10^9$\\
       & $K'$    &    $3.652\times 10^7$ &   $0.087\times 10^7$ &   $0.011\times 10^7$ &   $0.002\times 10^7$ &   $0.001\times 10^7$\\
\midrule
$p=20$ & $K$     &    $0.254\times 10^8$ &   $0.518\times 10^8$ &   $1.062\times 10^8$ &   $2.177\times 10^8$ &   $4.459\times 10^8$\\
       & $K'$    &    $0.227\times 10^8$ &   $0.463\times 10^8$ &   $0.947\times 10^8$ &   $1.941\times 10^8$ &   $3.975\times 10^8$\\
\bottomrule
\end{tabular}}
\end{table}

In the first sub-experiment, we applied the strategy outlined in Subsection~\ref{sec:nonstrong} for various values of $n,p$ and $\epsilon$.
To this end, we exploited the following formulae
\begin{align*}
m_R &= \lambda + \nu_{\min}(\bfB_R),\quad\text{with}\quad \bfB_R:=\sum_{i=1}^n \frac{e^{|\bX_i^\top\bfA\btheta^*|+R\|\bfA\bX_i\|_2}}
{(1+e^{2|\bX_i^\top\bfA\btheta^*|+2R\|\bfA\bX_i\|_2})^2}\bfA\bX_i\bX_i^\top\bfA,\\
(p\mu_R)^2 & =
\frac{2(M/2)^{p/2}}{(m_RR^2)^{p+4}\Gamma(p/2)}\sum_{j=0}^4C_4^j(-m_RR^2)^j\Gamma(p+4-j;m_RR^2),
\end{align*}
where $\Gamma(p;x) = \int_x^\infty t^{p-1} e^{-t}\,dt$ is the upper incomplete gamma function and $C_4^j$ stands for the binomial coefficient.
The proof of the fact that the quantities $m_R$ and $\mu_R$ defined by these formulae satisfy all the assumptions of Subsection~\ref{sec:nonstrong}
is provided in the supplementary material. In this experiment, we used two values of $\epsilon$ ($0.1$ and $0.01$), three values of dimension $p$
($2$, $5$ and $20$), and five values for the sample size $n$ (500, 1000, 2000, 4000 and 8000). We reported in Table~\ref{tab:2} the number of
iterates using the LMC algorithm ($K$) and the average number of iterates of the modified LMC algorithm as described in Subsection~\ref{sec:nonstrong} ($K'$).
Note that in the case of modified LMC algorithm, the number of iterates depends on the original data $(\bfX,\bY)$. Therefore, the numbers $K'$
reported in Table~\ref{tab:2} are those obtained by averaging over 100 independent trials.

The results of Table~\ref{tab:2} show clearly the advantage of using the strong-convexification trick. For instance, when $\epsilon=0.1$, $p=5$
and $n=1000$, the gain is very impressive since the number of iterations is reduced from nearly $7.5\times10^5$ to $2.2\times10^3$. This represents
a reduction by a factor close to 340. The gain is less significant in the case when the ratio $p/n$ is larger. Our explanation of this phenomenon
is that for a small ratio $p/n$, the posterior density has a very strong peak at its mode. Therefore, even for a relatively large radius $R$ the
condition number $M/m_R$ is not too large.  Thus, small $p/n$ is the typical situation in which the strong-convexification trick is
likely to lead to considerable savings in running-time.

In the second sub-experiment, we aimed at verifying the validity of the second-order approximation of the LMCO algorithm, hereafter referred to as
LMCO', obtained by applying the update rule (\ref{update1}). To this end, for $\epsilon=0.1$, $p\in\{2,5,10\}$ and for $n\in\{200,300,400,500\}$,
we generated $N_{\text{MC}}=100$ Monte-Carlo samples using the LMC algorithm and the LMCO' algorithm. To check the closeness
of the distributions of these two $p$-dimensional samples, we compared several aspects of them. More precisely, we compared their marginal means,
marginal medians and marginal quartiles. Mathematically speaking, for each data-set $\mathcal D_{\text{data}} = (\bfX,\bY)$, we generated
$N_{\text{MC}}$ samples $\mathcal D_{\text{MC}}=\{\btheta^{1},\ldots,\btheta^{N_{\text{MC}}}\}$ and $\bar{\mathcal D}_{\text{MC}}=
\{\bar\btheta^{1},\ldots,\bar\btheta^{N_{\text{MC}}}\}$ using the LMC and the LMCO', respectively. We then computed the normalised
distance between their marginal means: $d_{\text{mean}} = \frac1p\|\mean(\mathcal D_{\text{MC}})-\mean(\bar{\mathcal D}_{\text{MC}})\|_1$.
We also computed the quantities $d_{\rm median}$, $d_{Q_1}$ and $d_{Q_3}$, which are defined analogously by replacing the mean by the 
coordinate-wise median, first quartile and third quartile, respectively. The idea for considering these quantities is that, for large 
$N_{\rm MC}$ and small $\epsilon$, all the aforementioned distances should be close to zero.

We opted for the boxplot representation of 100 values of each of these distances obtained over 100 independent replications of the
data-set $\mathcal D_{\text{data}}$. These boxplots are drawn in Fig.~\ref{fig:2}. They show that the distances are small---at most 
of the order of $10^{-1}$---which may be considered as an argument in favor of the modification proposed in~(\ref{update1}). Indeed, 
with $\epsilon = 0.1$ and $N_{\rm MC}=100$, we could not expect to have an error of smaller order. This is very promising since this 
modified LMCO algorithm has a significantly smaller computational complexity than the original LMCO: each iteration has a worst-case 
accuracy $O(p^2)$ instead of $O(p^3)$, thanks to the fact that matrix exponentials as well as the inversion of the Hessian are replaced 
by the computation of the Hessian and its product with vectors.
\begin{figure}
%\boxed{\vbox{
\includegraphics[width=0.98\textwidth]{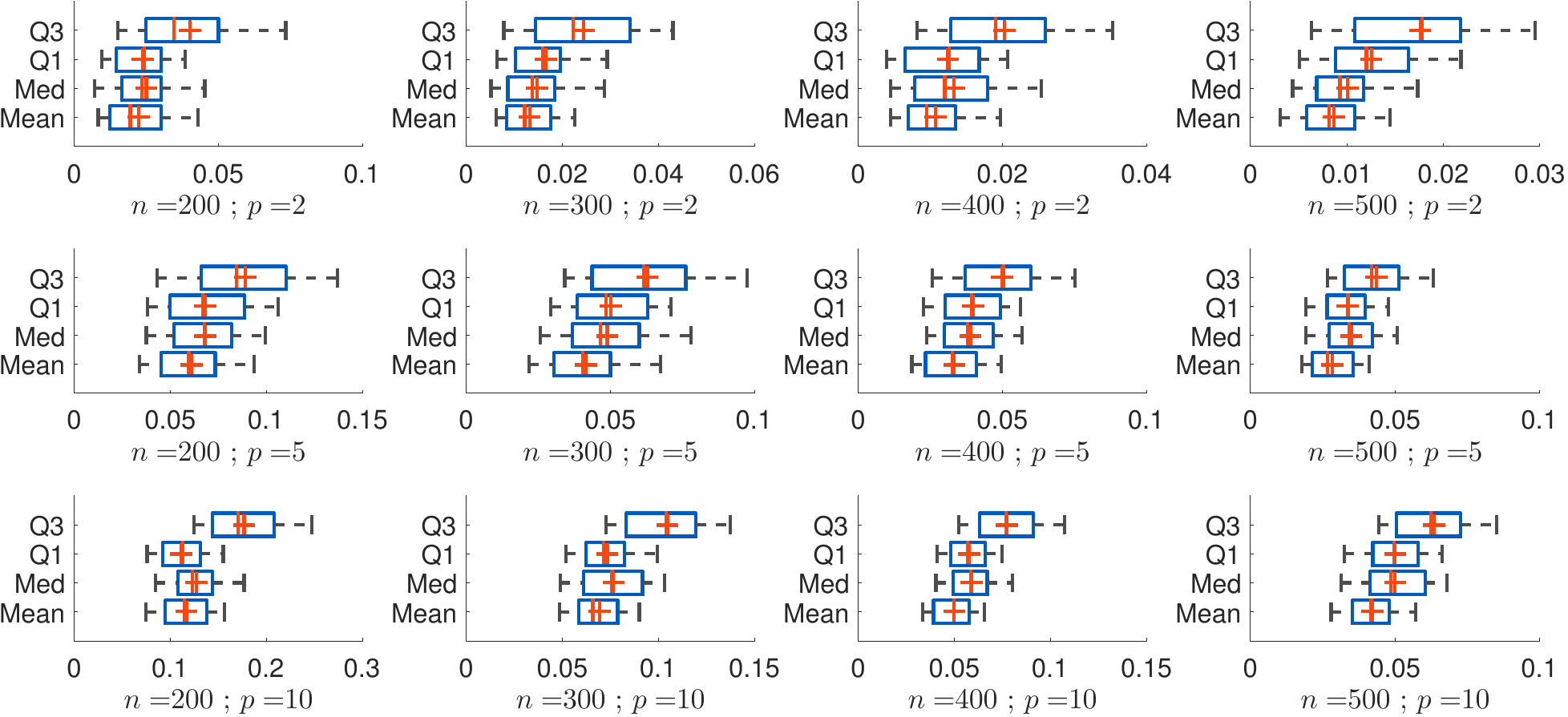}\hspace{-20pt}%}}
\vspace{-10pt}
\caption{Boxplots of the second sub-experiment carried out within the model of logistic regression. }
\label{fig:2}
\end{figure}

%
%\subsection{Application to image denoising}

\section{Summary and conclusion}\label{sec:concl}

We have established easy-to-implement, nonasymptotic theoretical guarantees for approximate sampling from
log-concave and strongly log-concave probability densities. To this end, we have analysed the Langevin Monte Carlo (LMC)
algorithm and its Ozaki discretised version LMCO. These algorithms can be regarded as the natural counterparts---when
the task of optimisation is replaced by the task of sampling---of the gradient descent algorithm, widely studied
in convex optimisation. Despite its broad applicability in the framework of Bayesian
statistics and  beyond, to the best of our knowledge, there were no theoretical result in the literature proving
that the computational complexity of the aforementioned algorithms scales at most polynomially in dimension and in
$\eps^{-1}$, the inverse of the desired precision level. The results
proved in the present work fill this gap by showing that in order to achieve a precision (in total variation) bounded 
from above by $\eps$, the LMC needs no more than $O\big(\eps^{-2}(p^3+p\log(\eps^{-1}))\big)$ evaluations of the gradient
when the target density is strongly log-concave and $O\big(\eps^{-4}p^5\log^2(p\vee\eps^{-1}))\big)$ evaluations of
the gradient when the target density is nonstrongly log-concave.
Further improvement of the rates can be achieved if a ``warm start'' is available. More precisely, if there is an
efficiently samplable distribution $\nu$ such that the chi-squared divergence between $\nu$ and the target scales
polynomially in $p$, then the LMC with an initial value drawn from $\nu$ needs no more than  $O\big(\eps^{-2}p\log^2(p\vee\eps^{-1}))\big)$
evaluations of the gradient when the target density is strongly log-concave and
$O\big(\eps^{-4}p^3\log^2(p\vee\eps^{-1}))\big)$ gradient evaluations when the target density is nonstrongly log-concave.
An important advantage of our results is that all the bounds come with explicit numerical constants of
reasonable magnitude.

The search for tractable theoretical guarantees for MCMC algorithms is an active topic of research not only in probability and
statistics but also in theoretical computer science and in machine learning.  To the best of our knowledge,
first computable bounds on the constants involved in the geometric convergence of Markov chains were derived in
\citep{MeynTweedie94}, see also subsequent work \citep{Rosenthal2002,douc2004} and the survey paper \citep{Roberts2004b}.
However, because of the broad generality of the considered Markov processes\footnote{The authors do not confine their study
to the log-concave densities.}, their results are difficult to implement for getting tight bounds on the constants in the
context of high dimensionality. In particular, we did not succeed in deriving from their results convergence rates for the
LMC algorithm (neither for its Metropolis-Hastings-adjusted version, MALA) that are polynomial in the dimension $p$ and
hold for every strongly log-concave target density. Note also that some nonasymptotic convergence results for the MALA
were obtained by \cite{Hairer2013}, where strongly log-concave four times continuously differentiable functions $f$
were considered. Unfortunately, the constants involved in their bounds are not explicit and cannot be used for our purposes.

The problem of sampling from log-concave distributions is not new. It has been considered in early references \citep{Frieze94}
and \citep{Frieze99}. An important progress in this topic was made in a series of papers by Lov\'azs and Vempala (see,
in particular, \cite{Lovasz2,Lovasz1} for the sharpest results), which are perhaps the closest to our work. They investigated
the problem of sampling from a log-concave density with a compact support and derived nonasymptotic bounds on the number of
steps that are sufficient for approximating the target density; the best bounds are obtained for the hit-and-run algorithm. The analysis they carried out is very different
from the one presented in the present work and the constants in their results are prohibitively large (for instance, $10^{31}$
in \cite[Corollary 1.2]{Lovasz2}), which makes the established guarantees of little interest for practice.
On the positive side, one of the most remarkable features of the results proved in \citep{Lovasz2,Lovasz1}
is that the number of steps required to achieve the level $\eps$ scales polylogarithmically in $1/\eps$.
This is of course much better than the dependence on $\eps$ in our bounds. However, the logarithm of $1/\eps$
in their result is raised to power $5$, which for most interesting values of $\eps$ behaves itself as a linear
function of $1/\eps$. On the down side, the dependence on the dimension in the results of \cite{Lovasz2,Lovasz1},
when no warm start is available, scales as $p^4$, which is worse than $p^3$ inferred from our
analysis. A difference worth being stressed between our framework and that of \cite{Lovasz2,Lovasz1} is that the LMC algorithm
we have analysed here is based on the evaluations of the gradient of $f$, whereas the algorithms studied in \citep{Lovasz2,Lovasz1}
need to sample from the restriction of $\pi_f$ on the lines. On a related note, building on the results by Lov\`azs and Vempala,
\cite{Belloni2009} provided polynomial guarantees for sampling from a distribution which converges asymptotically to a Gaussian one.

After the submission of the present paper, the manuscript \citep{Moulines2015} has been posted on arXiv, which refines our results
in various directions. In particular, the authors of that manuscript manage to assess more accurately the impact of the initial
distribution on the final precision of the LMC algorithm and investigate an Euler scheme with nonconstant step-size. Roughly speaking,
they prove that the rate we obtained in the case of a warm start is valid for any starting point which is not too far away from
the mode of the density. On a related note, we focus in the present work only on the total variation distance
between some MCMC algorithms and the target distribution, whereas in many applications one may be only interested in approximating
integrals with respect to the target distribution. Clearly, guarantees on the total variation distance imply guarantees on the
approximations of integrals, at least when the integrands are bounded functions. However, since the problem of approximating integrals is, in some sense,
easier than sampling from a distribution, one could hope to get tighter bounds for the former problem. This and related
questions are thoroughly investigated in~\citep{Moulines2015}.

Although the main contribution of the present work is of theoretical nature, we can also draw some conclusions which might be
of interest for practitioners. First of all, our results show that the heuristic choice of the stopping rule for the MCMC algorithms
is not the only possible option: it is also possible to have theoretically grounded guidelines for choosing the stopping time.
The resulting algorithm will be of polynomial complexity both in dimension and in the precision level. Second, the results reported
in this work show that there is no need to apply Metropolis-Hastings correction to the Langevin algorithm and its various variants
in order to ensure their convergence. Third, when the dimension is not very high and a high level of precision is required (\textit{i.e.},
when $p^{3/2}\epsilon$ is small), the LMCO algorithm is preferable to the LMC algorithm, and the modified LMCO using the update rule
of Eq.\ (\ref{update1}) is even better. Note, however, that this last claim was checked empirically but comes without any theoretical
justification.

Finally, we would like to mention that, in recent years, several studies making the connection between convex optimisation and
MCMC algorithms were carried out. They mainly focused on proposing new algorithms of approximate
sampling \citep{Girolami11,Schreck2013,Pereyra2014} inspired by the ideas coming from convex optimisation. We hope that the
present work will stimulate a more extensive investigation of the relationship between approximate sampling and optimisation,
especially in the aim of establishing user friendly theoretical guarantees for the MCMC algorithms.

\section{Postponed proofs and some technical results}\label{sec:proofs2}

\subsection{Auxiliary results}
%We start by recalling a simple result. %For the sake of completeness, its proof is also provided.

\begin{lemma}[Lemma 1.2.3 in \cite{Nest}]\label{lem:5}
If the function $f$ satisfies the second inequality in (\ref{convex1}), then
$f(\btheta)-f(\bar\btheta)-\nabla f(\bar\btheta)^\top (\btheta-\bar\btheta)
\le \frac{M}2\|\btheta-\bar\btheta\|_2^2$, $\forall \btheta,\bar\btheta\in\RR^p$.
\end{lemma}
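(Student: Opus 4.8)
The plan is to reduce this multivariate inequality to a one-dimensional integral along the segment joining $\bar\btheta$ to $\btheta$, and then apply the Lipschitz bound on $\nabla f$ pointwise inside that integral. First I would introduce the auxiliary scalar function $g(t) = f\big(\bar\btheta + t(\btheta-\bar\btheta)\big)$ for $t\in[0,1]$. Since $f$ is continuously differentiable, the chain rule gives $g'(t) = \nabla f\big(\bar\btheta + t(\btheta-\bar\btheta)\big)^\top(\btheta-\bar\btheta)$, and the fundamental theorem of calculus yields $f(\btheta)-f(\bar\btheta) = g(1)-g(0) = \int_0^1 g'(t)\,dt$.

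Next I would rewrite the constant linear term as an integral, $\nabla f(\bar\btheta)^\top(\btheta-\bar\btheta) = \int_0^1 \nabla f(\bar\btheta)^\top(\btheta-\bar\btheta)\,dt$, and subtract it inside the integrand, obtaining
\[
f(\btheta)-f(\bar\btheta)-\nabla f(\bar\btheta)^\top(\btheta-\bar\btheta)
= \int_0^1 \big(\nabla f(\bar\btheta+t(\btheta-\bar\btheta))-\nabla f(\bar\btheta)\big)^\top(\btheta-\bar\btheta)\,dt.
\]

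The core estimate is then to bound the integrand. By the Cauchy–Schwarz inequality followed by the second inequality in (\ref{convex1}) applied to the pair of points $\bar\btheta+t(\btheta-\bar\btheta)$ and $\bar\btheta$, the integrand is at most
\[
\|\nabla f(\bar\btheta+t(\btheta-\bar\btheta))-\nabla f(\bar\btheta)\|_2\,\|\btheta-\bar\btheta\|_2
\le M\,t\,\|\btheta-\bar\btheta\|_2^2,
\]
where I used $\|t(\btheta-\bar\btheta)\|_2 = t\|\btheta-\bar\btheta\|_2$ for $t\ge 0$. Integrating the right-hand side $M t\|\btheta-\bar\btheta\|_2^2$ over $t\in[0,1]$ produces the factor $\int_0^1 t\,dt = \tfrac12$, hence the bound $\tfrac{M}{2}\|\btheta-\bar\btheta\|_2^2$, which is exactly the claim.

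I do not expect any genuine obstacle here; this is the standard "descent lemma" argument and every step is routine. The only points meriting a word of care are the justification of differentiability of $g$ (the Lipschitz hypothesis already presupposes that $\nabla f$ exists everywhere, and continuous differentiability of $f$ legitimizes the use of the fundamental theorem of calculus) and the correct ordering of Cauchy–Schwarz before invoking the Lipschitz estimate.
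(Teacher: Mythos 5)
Your proof is correct, and it is the standard descent-lemma argument: the paper itself gives no proof but cites Lemma 1.2.3 of Nesterov, whose proof is precisely your reduction to $g(t)=f(\bar\btheta+t(\btheta-\bar\btheta))$, the fundamental theorem of calculus, Cauchy--Schwarz combined with the Lipschitz bound on $\nabla f$, and integration of $Mt\|\btheta-\bar\btheta\|_2^2$ over $[0,1]$. Every step is sound, including your observation that Lipschitz continuity of $\nabla f$ makes $g'$ continuous and so justifies the integral representation.
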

%\begin{proof}
%Let us introduce the auxiliary function $\varphi:[0,1]\to \RR$ defined by $\varphi(t) = f(\bar\btheta + t(\btheta-\bar\btheta))$.
%This definition readily entails that $f(\btheta)-f(\bar\btheta)-\nabla f(\bar\btheta)^\top (\btheta-\bar\btheta)=\varphi(1)-\varphi(0)-\varphi'(0)$.
%Since $\varphi$ is continuously differentiable, it holds $\varphi(1)-\varphi(0)-\varphi'(0) = \int_0^1 (\varphi'(t)-\varphi'(0))\,dt\le
%\int_0^1 |\varphi'(t)-\varphi'(0)|\,dt$. In view of the definition of $\varphi$, we have
%\begin{align*}
%|\varphi'(t)-\varphi'(0)| &= \big|\big(\nabla f(\bar\btheta + t(\btheta-\bar\btheta))-\nabla f(\bar\btheta)\big)^\top(\btheta-\bar\btheta)\big|\\
%&\le \big\|\nabla f(\bar\btheta + t(\btheta-\bar\btheta))-\nabla f(\bar\btheta)\big\|_2\cdot\|\btheta-\bar\btheta\big\|_2 \qquad \text{(by Cauchy-Schwarz)}\\
%&\le M t\|\btheta-\bar\btheta\big\|_2^2. \qquad\qquad\qquad\qquad\qquad\qquad\qquad\text{(since $f$ satisfies (\ref{convex1}))}
%\end{align*}
%This leads to the inequalities
%\begin{align*}
%f(\btheta)-f(\bar\btheta)-\nabla f(\bar\btheta)^\top (\btheta-\bar\btheta)
%        &\le \int_0^1 |\varphi'(t)-\varphi'(0)|\,dt\\
%        &\le \int_0^1 M t\|\btheta-\bar\btheta\big\|_2^2\,dt
%        \le \frac{M}{2}\|\btheta-\bar\btheta\big\|_2^2,
%\end{align*}
%which complete the proof.
%\end{proof}

\begin{lemma}\label{lem:2}
Let us denote by $\nu_{h,\bx}$ the conditional density of $\bvartheta^{(1,h)}$ given $\bvartheta^{(0)}=\bx$, where the sequence
$\{\bvartheta^{(k,h)}\}_{k\in\NN}$ is defined by (\ref{algoLMC}) with a function $f$ satisfying (\ref{convex1}). (In other terms,
$\nu_{h,\bx}$ is the density of the Gaussian distribution $\mathcal N(\bx-h\nabla f(\bx),2h\bfI_p)$.) If $h\le 1/(2M)$ then
$$
\bfE_\pi\bigg[\frac{\nu_{h,\bx}(\bvartheta)^2}{\pi(\bvartheta)^2}\bigg]
     \le  \exp\bigg\{\frac{1}{2m}\|\nabla f(\bx)\|_2^2-\frac{p}{2}\log (2 hm)\bigg\}.
$$
\end{lemma}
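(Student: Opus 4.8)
The plan is to evaluate the expectation directly as a Gaussian integral and to control the two resulting factors by the two inequalities in (\ref{convex1}). Writing $\pi(\by)=Z^{-1}e^{-f(\by)}$ with $Z=\int_{\RR^p}e^{-f(\bu)}\,d\bu$, and recalling that $\nu_{h,\bx}$ is the density of $\mathcal N(\bx-h\nabla f(\bx),2h\bfI_p)$, I would first rewrite
\begin{equation*}
\bfE_\pi\Big[\tfrac{\nu_{h,\bx}(\bvartheta)^2}{\pi(\bvartheta)^2}\Big]=\int_{\RR^p}\frac{\nu_{h,\bx}(\by)^2}{\pi(\by)}\,d\by = Z\,(4\pi h)^{-p}\int_{\RR^p}\exp\Big\{-\tfrac{1}{2h}\|\by-\bx+h\nabla f(\bx)\|_2^2+f(\by)\Big\}\,d\by.
\end{equation*}
The two factors, $Z$ and the remaining integral, will then be bounded using the lower and upper quadratic bounds on $f$, respectively.

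For $Z$ I would invoke the first (strong convexity) inequality of (\ref{convex1}), namely $f(\bu)\ge f(\bx)+\nabla f(\bx)^\top(\bu-\bx)+\tfrac{m}2\|\bu-\bx\|_2^2$, so that $e^{-f(\bu)}$ is dominated by a Gaussian. Completing the square in the exponent yields
\begin{equation*}
Z\le e^{-f(\bx)}\exp\Big\{\tfrac{1}{2m}\|\nabla f(\bx)\|_2^2\Big\}\,(2\pi/m)^{p/2},
\end{equation*}
which already produces the factor $\exp\{\tfrac1{2m}\|\nabla f(\bx)\|_2^2\}$ appearing in the claim.

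For the remaining integral I would use the upper quadratic bound $f(\by)\le f(\bx)+\nabla f(\bx)^\top(\by-\bx)+\tfrac{M}2\|\by-\bx\|_2^2$ provided by Lemma~\ref{lem:5}. Substituting $\bz=\by-\bx$, the linear terms conveniently cancel and the exponent collapses to $f(\bx)-\tfrac{h}2\|\nabla f(\bx)\|_2^2-\tfrac12(\tfrac1h-M)\|\bz\|_2^2$; this cancellation of $\nabla f(\bx)^\top\bz$ is the crux of the computation. Under $h\le 1/(2M)$ one has $\tfrac1h-M\ge M>0$, so the Gaussian integral in $\bz$ converges and equals $(2\pi h/(1-Mh))^{p/2}$ up to the scalar prefactor $e^{f(\bx)}e^{-\frac h2\|\nabla f(\bx)\|_2^2}$.

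Multiplying the two bounds, the factors $e^{\pm f(\bx)}$ cancel, the gradient terms combine as $\tfrac12(\tfrac1m-h)\|\nabla f(\bx)\|_2^2\le\tfrac1{2m}\|\nabla f(\bx)\|_2^2$, and the normalizing constants simplify to $[4hm(1-Mh)]^{-p/2}$. The final step is to observe that $h\le1/(2M)$ forces $1-Mh\ge\tfrac12$, whence $[4hm(1-Mh)]^{-p/2}\le(2hm)^{-p/2}=\exp\{-\tfrac p2\log(2hm)\}$, which is exactly the claimed dimensional factor. I expect the only (modest) obstacle to be bookkeeping: keeping straight that strong convexity controls $Z$ while smoothness controls the $1/\pi$ integral, and verifying that the step-size condition $h\le1/(2M)$ is precisely what is needed both for integrability of the Gaussian and for the clean passage from $1-Mh$ to the factor $2hm$.
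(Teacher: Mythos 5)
Your proof is correct and follows essentially the same route as the paper's: strong convexity controls the normalising constant (producing the $e^{\|\nabla f(\bx)\|_2^2/(2m)}$ factor), the smoothness bound of Lemma~\ref{lem:5} controls $f(\btheta)$ inside the integral, the linear terms cancel upon completing the square, and $h\le 1/(2M)$ gives $1-Mh\ge 1/2$, turning $[4hm(1-Mh)]^{-p/2}$ into $(2hm)^{-p/2}$. The only cosmetic difference is that the paper packages the same computation as a pointwise upper bound on $\pi(\btheta)^{-1}$ before integrating, whereas you factor out $Z$ and bound the two pieces separately.
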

\begin{proof}
%Let us bound from above the term $\bfE_\pi\big[\frac{\nu_{h,\bx}(\bvartheta)^2}{\pi(\bvartheta)^2}\big]$ arising from the application of Lemma~\ref{lem:1}.
In view of the relations
\begin{align*}
\pi(\btheta)^{-1}
    & = e^{f(\btheta)}\int_{\RR^p} e^{-f(\bar\btheta)}\,d\bar\btheta
		  = e^{f(\btheta)-f(\bx)}\int_{\RR^p} e^{-f(\bar\btheta)+f(\bx)}\,d\bar\btheta\\
    &\le e^{\nabla f(\bx)^\top(\btheta-\bx)+\frac{M}{2}\|\btheta-\bx\|_2^2}\int_{\RR^p}
		     e^{-\nabla f(\bx)^\top(\bar\btheta-\bx)-\frac{m}{2}\|\bar\btheta-\bx\|_2^2}\,d\bar\btheta\\
    &\le \bigg(\frac{2\pi}{m} \bigg)^{p/2}\exp\Big\{{\nabla f(\bx)^\top(\btheta-\bx)+\frac{M}{2}\|\btheta-\bx\|_2^2+\frac{1}{2m}\|\nabla f(\bx)\|_2^2}\Big\}
\end{align*}
we have
\begin{align*}
\bfE_\pi\bigg[\frac{\nu_{h,\bx}(\bvartheta)^2}{\pi(\bvartheta)^2}\bigg]
    &  =  (4\pi h)^{-p} \int_{\RR^p} \exp\Big\{ -\frac{1}{2h}\;\|\btheta-\bx+h\nabla f(\bx)\|_2^2\Big\}\,\pi(\btheta)^{-1}\,d\btheta    \\
    & \le (4\pi h)^{-p}(2\pi /m )^{p/2} e^{\frac{1}{2m}\|\nabla f(\bx)\|_2^2}
		      \int_{\RR^p} \exp\Big\{ -\frac{(1-hM)\|\btheta-\bx\|_2^2}{2h}\Big\}\,d\btheta\\
    & =   (4\pi h)^{-p}(2\pi/m)^{p/2} (2\pi h)^{p/2}(1-hM)^{-p/2}e^{\frac{1}{2m}\|\nabla f(\bx)\|_2^2}.
\end{align*}
After a suitable rearrangement of the terms we get the claim of Lemma~\ref{lem:2}.
\end{proof}

\subsection{Proofs of results concerning the LMC}

Instead of proving Proposition~\ref{prop:1}, we prove below the following stronger result.

\begin{proposition}\label{prop:1b}
Let the function $f$ be continuously differentiable on $\RR^p$ and satisfy (\ref{convex1}) with $f^* = \inf_{\bx\in\RR^p} f(\bx)$. Then, for every $h\le 1/M$, we have
\begin{align}
\bfE \big[f(\bvartheta^{(k,h)})-f^*\big] &\le (1-mh)^k\bfE \big[f(\bvartheta^{(0)})-f^*\big] + \frac{Mp}{m(2-Mh)},\label{in:prop1:a}\\
\bfE \big[\|\bvartheta^{(k,h)}-\btheta^*\|_2^2\big] &\le \frac{Me^{-mhk}}{m}\bfE \big[\|\bvartheta^{(0)}-\btheta^*\|_2^2\big] + \frac{2Mp}{m^2(2-Mh)}.\label{eq:normtheta:b}
\end{align}
\end{proposition}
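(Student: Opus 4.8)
The plan is to establish a one-step recursion for $a_k := \bfE[f(\bvartheta^{(k,h)}) - f^*]$ and then iterate it. Writing $\btheta_k = \bvartheta^{(k,h)}$, $\boldg_k = \nabla f(\btheta_k)$ and $\bxi = \bxi^{(k+1)}$, the update rule (\ref{algoLMC}) gives $\btheta_{k+1} - \btheta_k = -h\boldg_k + \sqrt{2h}\,\bxi$. First I would apply the quadratic upper bound of Lemma~\ref{lem:5} to the pair $(\btheta_{k+1},\btheta_k)$,
$$
f(\btheta_{k+1}) \le f(\btheta_k) + \boldg_k^\top(\btheta_{k+1}-\btheta_k) + \tfrac{M}{2}\|\btheta_{k+1}-\btheta_k\|_2^2,
$$
substitute the increment, and take the conditional expectation given $\btheta_k$. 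Since $\bxi$ is centered, independent of $\btheta_k$, and satisfies $\bfE[\|\bxi\|_2^2]=p$, the terms linear in $\bxi$ vanish and I am left with
$$
\bfE\big[f(\btheta_{k+1})\mid \btheta_k\big] \le f(\btheta_k) - h\big(1-\tfrac{Mh}{2}\big)\|\boldg_k\|_2^2 + Mhp.
$$

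Next I would invoke the inequality $\|\nabla f(\btheta)\|_2^2 \ge 2m\,(f(\btheta)-f^*)$, which follows from the strong convexity in (\ref{convex1}) by minimising the quadratic lower bound over $\btheta$. Because $h\le 1/M$ ensures $1-Mh/2\ge 0$, this converts the previous display into a genuine contraction:
$$
\bfE\big[f(\btheta_{k+1})\mid\btheta_k\big] - f^* \le \big(1 - mh(2-Mh)\big)\big(f(\btheta_k)-f^*\big) + Mhp.
$$
Taking expectations yields $a_{k+1}\le \rho\,a_k + Mhp$ with $\rho = 1 - mh(2-Mh)\in[0,1)$. Summing the geometric series, the stationary term is exactly $Mhp/(1-\rho) = Mp/\big(m(2-Mh)\big)$; this is the step where the sharp constant appears, and it is essential to keep the factor $(2-Mh)$ intact here rather than relaxing it. Only for the transient term would I then use $\rho\le 1-mh$ (valid since $2-Mh\ge 1$) together with $a_0\ge 0$ to replace $\rho^k$ by $(1-mh)^k$, which gives (\ref{in:prop1:a}).

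For the second bound (\ref{eq:normtheta:b}) I would sandwich $f-f^*$ between two quadratics in $\|\cdot-\btheta^*\|_2$. Using $\nabla f(\btheta^*)=0$, strong convexity gives $\tfrac{m}{2}\|\btheta-\btheta^*\|_2^2\le f(\btheta)-f^*$, while Lemma~\ref{lem:5} gives $f(\btheta)-f^*\le \tfrac{M}{2}\|\btheta-\btheta^*\|_2^2$. Applying the lower sandwich to $\btheta_k$ and the upper sandwich to $\btheta_0$ inside (\ref{in:prop1:a}) produces the prefactor $M/m$ on the initial moment and the stationary term $2Mp/\big(m^2(2-Mh)\big)$; replacing $(1-mh)^k$ by the larger $e^{-mhk}$ via $1-u\le e^{-u}$ then delivers (\ref{eq:normtheta:b}).

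The proof presents no genuine analytic obstacle—each ingredient is a standard consequence of (\ref{convex1}) and Lemma~\ref{lem:5}. The only delicate point is the constant bookkeeping: tracking the factor $(2-Mh)$ exactly through the geometric sum, and relaxing to $(1-mh)^k$ or $e^{-mhk}$ solely in places where it does not weaken the stated stationary constants.
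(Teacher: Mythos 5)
Your proposal is correct and follows essentially the same route as the paper's proof: the descent inequality from Lemma~\ref{lem:5} applied to the update rule, the gradient-dominance bound $\|\nabla f(\btheta)\|_2^2\ge 2m(f(\btheta)-f^*)$, the geometric recursion with contraction factor $1-mh(2-Mh)$ summed to give the stationary term $Mp/(m(2-Mh))$, and the quadratic sandwich plus $(1-mh)^k\le e^{-mhk}$ for (\ref{eq:normtheta:b}). Your explicit remark on where the relaxation $1-mh(2-Mh)\le 1-mh$ is applied (transient term only) makes precise a step the paper leaves implicit, but the argument is the same.
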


\begin{proof}%[Proof of Proposition~\ref{prop:1b}]
Throughout this proof, we use the shorthand notation $f^{(k)} =f(\bvartheta^{(k,h)})$ and
$\nabla f^{(k)}= \nabla f(\bvartheta^{(k,h)}) $. In view of the relation (\ref{algoLMC}) and the Taylor expansion, we have
\begin{align*}
f^{(k+1)} &\le  f^{(k)} +(\nabla f^{(k)})^\top (\bvartheta^{(k+1,h)}-\bvartheta^{(k,h)}) +
\frac{M}{2} \|\bvartheta^{(k+1,h)}-\bvartheta^{(k,h)}\|_2^2\\
& = f^{(k)} -h\|\nabla f^{(k)} \|_2^2  + \sqrt{2h}\; (\nabla f^{(k)})^\top\bxi^{(k+1)}+
\frac{M}{2} \|h\nabla f^{(k)} -\sqrt{2h}\;\bxi^{(k+1)}\|_2^2.
\end{align*}
Taking the expectations of both sides, we get
\begin{align}\label{eq:2}
\bfE\big[f^{(k+1)} \big] &\le \bfE\big[f^{(k)} \big] -h\bfE\big[\|\nabla f^{(k)} \|_2^2\big]
+\frac{M}{2} h^2 \bfE\big[\|\nabla f^{(k)} \|_2^2\big]+Mhp\nonumber\\
&=
\bfE\big[f^{(k)} \big] -\frac12h(2-Mh)\bfE\big[\|\nabla f^{(k)} \|_2^2\big]+Mhp.
\end{align}
It is well known (see, for instance,  \citep{BoydBook}) that for the global minimum $f^*$ of $f$ over $\RR^p$, we have
$$
\|\nabla f(\bx)\|_2^2 \ge 2m \big(f(\bx)-f^*\big),\qquad \forall \bx\in\RR^p.
$$
Applying this inequality to $\bx = \bvartheta^{(k,h)}$ and combining it with (\ref{eq:2}), whenever $h< 2/M$ we get
\begin{align}\label{eq:3}
\bfE\big[f^{(k+1)} \big] &\le
\bfE\big[f^{(k)} \big] -mh(2-Mh)\bfE\big[f^{(k)} -f^*\big]+Mhp.
\end{align}
Let us set $\gamma = mh(2-Mh)\in(0,1)$ for any $h\in(0,2/M)$. Subtracting $f^*$ from the both sides of (\ref{eq:3}) we arrive at
\begin{align}\label{eq:4}
\bfE\big[f^{(k+1)} -f^*\big] &\le (1-\gamma)\bfE\big[f^{(k)} -f^*\big]+Mhp.
\end{align}
This implies that
\begin{align}\label{eq:5}
\bfE\big[f^{(k+1)} -f^*\big] &\le (1-\gamma)^{k+1}\bfE\big[f(\bvartheta^{(0)})-f^*\big]+Mhp(1+\ldots+(1-\gamma)^{k})\nonumber\\
&\le(1-\gamma)^{k+1}\bfE\big[f(\bvartheta^{(0)})-f^*\big]+Mhp\gamma^{-1}.
%\nonumber\\
%&=(1-\gamma)^{k+1}\bfE\big[f(\bvartheta^{(0)})-f^*\big]+\frac{Mp}{m(1-Mh)}.
\end{align}
Inequality (\ref{in:prop1:a}) follows by replacing $\gamma$ by $mh(2-Mh)$. To prove (\ref{eq:normtheta:b}), it suffices to combine
(\ref{in:prop1:a}) with the first inequality in (\ref{convex1}), Lemma~\ref{lem:5} and the inequality $(1-mh)^k\le e^{-mhk}$.
\end{proof}

\begin{corollary}\label{cor:3}
Let $h\le 1/\alpha M$ with $\alpha\ge 1$ and $K\ge 1$ be an integer. Under the conditions of Proposition~\ref{prop:1}, it holds
$$
h\sum_{k=0}^{K-1} \bfE[\|\nabla f(\bvartheta^{(k,h)})\|_2^2] \le \frac{M\alpha}{2\alpha-1}\bfE\big[\|\bvartheta^{(0)}-\btheta^*\|_2^2\big]+ \frac{2\alpha M Khp}{2\alpha -1}.
$$
\end{corollary}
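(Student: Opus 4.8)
The plan is to reuse the one-step inequality (\ref{eq:2}) already derived in the proof of Proposition~\ref{prop:1b}, which isolates exactly the quantity $\bfE[\|\nabla f(\bvartheta^{(k,h)})\|_2^2]$ that we wish to sum. With the shorthand $f^{(k)} = f(\bvartheta^{(k,h)})$, that inequality reads
$$
\bfE\big[f^{(k+1)}\big] \le \bfE\big[f^{(k)}\big] - \tfrac12 h(2-Mh)\,\bfE\big[\|\nabla f(\bvartheta^{(k,h)})\|_2^2\big] + Mhp.
$$
Rearranging so that the gradient term is isolated and summing over $k=0,\ldots,K-1$ produces a telescoping sum on the right-hand side, leaving
$$
\tfrac12 h(2-Mh)\sum_{k=0}^{K-1}\bfE\big[\|\nabla f(\bvartheta^{(k,h)})\|_2^2\big] \le \bfE\big[f^{(0)}\big] - \bfE\big[f^{(K)}\big] + MhpK.
$$
Since $f(\bvartheta^{(K,h)}) \ge f^*$ holds surely, the terminal term satisfies $-\bfE[f^{(K)}]\le -f^*$, so the right-hand side is bounded by $\bfE[f^{(0)}-f^*] + MhpK$.

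It then remains to perform two conversions. First, the hypothesis $h \le 1/(\alpha M)$ gives $Mh \le 1/\alpha$, hence $2 - Mh \ge (2\alpha-1)/\alpha$, so that $\tfrac12 h(2-Mh) \ge \tfrac{(2\alpha-1)h}{2\alpha}$. Dividing through by this factor turns the left-hand side into $h\sum_k \bfE[\|\nabla f(\bvartheta^{(k,h)})\|_2^2]$ at the cost of multiplying the right-hand side by $\tfrac{2\alpha}{2\alpha-1}$; the resulting term $\tfrac{2\alpha}{2\alpha-1}MhpK$ is already exactly the second summand claimed. Second, to convert $\bfE[f^{(0)}-f^*]$ into the squared-distance term, I would apply Lemma~\ref{lem:5} with $\btheta = \bvartheta^{(0)}$ and $\bar\btheta = \btheta^*$: since $\nabla f(\btheta^*)=0$ and $f(\btheta^*)=f^*$, this yields $f^{(0)}-f^* \le \tfrac{M}{2}\|\bvartheta^{(0)}-\btheta^*\|_2^2$, and taking expectations gives $\bfE[f^{(0)}-f^*]\le \tfrac{M}{2}\bfE[\|\bvartheta^{(0)}-\btheta^*\|_2^2]$. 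Multiplying by $\tfrac{2\alpha}{2\alpha-1}$ produces precisely $\tfrac{\alpha M}{2\alpha-1}\bfE[\|\bvartheta^{(0)}-\btheta^*\|_2^2]$, the first summand in the statement.

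There is no genuine obstacle here: the corollary is essentially a byproduct of the computation already carried out for Proposition~\ref{prop:1b}, repackaged to bound the cumulative gradient energy rather than the function value. The only points requiring a little care are the sign of the dropped terminal term (harmless because $f$ is bounded below by $f^*$) and the uniform lower bound on $2-Mh$, which here comes from the sharper step-size restriction $h \le 1/(\alpha M)$ rather than the coarser $h \le 1/M$. It is precisely this refined restriction that yields the factor $\alpha/(2\alpha-1)$ appearing throughout the bound.
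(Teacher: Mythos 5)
Your proof is correct and coincides with the paper's own argument: both start from the one-step inequality (\ref{eq:2}), use $h\le 1/(\alpha M)$ to bound $2-Mh\ge(2\alpha-1)/\alpha$, sum telescopically with the terminal bound $f^{(K)}\ge f^*$, and finish by converting $\bfE[f^{(0)}-f^*]$ into $\tfrac{M}{2}\bfE[\|\bvartheta^{(0)}-\btheta^*\|_2^2]$ via Lemma~\ref{lem:5}. No gaps.
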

\begin{proof}
Using inequality (\ref{eq:2}) and the fact that $2-Mh\ge (2\alpha-1)/\alpha$, we get
$$
\frac{h(2\alpha-1)}{2\alpha}\bfE\big[\|\nabla f^{(k)} \|_2^2\big]  \le
\bfE\big[f^{(k)}-f^{(k+1)}\big] +Mhp,\qquad \forall k\in\NN.
$$
Summing up these inequalities for $k=0,\ldots,K-1$ and using the obvious bound $f^{(K)}\ge f^*$, we get
$$
h\sum_{k=0}^{K-1}\bfE\big[\|\nabla f^{(k)} \|_2^2\big]  \le  \frac{2\alpha}{2\alpha-1}\bfE\big[f^{(0)}-f^*\big] +\frac{2\alpha MKhp}{2\alpha-1}.
$$
To complete the proof, it suffices to remark that in view of Lemma~\ref{lem:5}, it holds
$2\bfE\big[f^{(0)}-f^*\big]\le M\bfE\big[\|\bvartheta^{(0)}-\btheta^*\|_2^2\big]$.
\end{proof}

\begin{proof}[Proof of Lemma~\ref{lem:1}]
The first inequality in (\ref{convex1}) yields $\big(-\nabla f(\btheta)+\nabla f(\bar\btheta)\big)^\top(\btheta-\bar\btheta)\le -\frac{m}{2}\|\btheta-\bar\btheta\|_2^2$
for every $\btheta,\bar\btheta\in\RR^p$. Therefore, according to \cite[Remark 4.14]{ChenWang} and \cite[Corollary 4.8.2]{BGL14}, the process $\bL_t$ is geometrically ergodic
in $L^2(\RR^p,\pi)$  that is:
\begin{equation}\label{spectralgap}
\int_{\RR^p} \big( \bfE\big[\varphi(\bL_t)|\bL_0=\bx\big]-\bfE_\pi\big[\varphi(\bvartheta)\big]\big)^2 \pi(\bx)\,d\bx \le e^{-t m} \bfE_\pi\big[\varphi^2(\bvartheta)\big]
\end{equation}
for every $t>0$ and every $\varphi\in L^2(\RR^p;\pi)$. The claim of the lemma follows from this inequality by simple application of the Cauchy-Schwarz inequality.
Indeed, by definition of the total variation and in view of the fact that $\pi$ is the invariant density of the semigroup $\bPLt$, we have
\begin{align*}
\|\nu\bPLt-\pi\|_{\rm TV}
&=\sup_{A\in\mathscr B(\RR^p)} \bigg|\int_{\RR^p}\bPLt(\bx,A)\nu(\bx)\,d\bx-\pi(A)\bigg|\\
&=\sup_{A\in\mathscr B(\RR^p)} \bigg|\int_{\RR^p}\big(\bPLt(\bx,A)-\pi(A)\big)\nu(\bx)\,d\bx\bigg|\\
&=\sup_{A\in\mathscr B(\RR^p)} \bigg|\int_{\RR^p}\big(\bPLt(\bx,A)-\pi(A)\big)\big(\nu(\bx)-\pi(\bx)\big)\,d\bx\bigg|\\
&\le\sup_{A\in\mathscr B(\RR^p)} \int_{\RR^p}\Big|\bPLt(\bx,A)-\pi(A)\Big|\cdot\Big|\frac{\nu(\bx)}{\pi(\bx)}-1\Big|\,\pi(\bx)\,d\bx.
\end{align*}
Using the Cauchy-Schwarz inequality, we get
\begin{align*}
\|\nu\bPLt-\pi\|_{\rm TV}
    &\le\sup_{A\in\mathscr B(\RR^p)} \bigg(\int_{\RR^p}\big|\bPLt(\bx,A)-\pi(A)\big|^2\,\pi(\bx)\,dx\bigg)^{1/2}
        \chi^2(\nu\|\pi)^{1/2}.
\end{align*}
For every fixed Borel set $A$, if we set $\varphi(\bx) = \ind_A(\bx)-\pi(A)$ and use (\ref{spectralgap}), we obtain that
\begin{align*}
\int_{\RR^p}\big|\bPLt(\bx,A)-\pi(A)\big|^2\,\pi(\bx)\,d\bx
& = \int_{\RR^p}\big(\bfE\big[\varphi(\bL_t)|\bL_0=\bx\big]-\bfE_\pi\big[\varphi(\bvartheta)\big]\big)^2\,\pi(\bx)\,d\bx \\
&\le e^{-tm} \bfE_\pi\big[\varphi(\bvartheta)^2\big]\\
&= e^{-tm} \pi(A)(1-\pi(A))\le \frac14\; e^{-tm}.
\end{align*}
This completes the proof of the lemma.
\end{proof}

\begin{proof}[Proof of Lemma~\ref{lem:3}]
Setting $T=Kh$ and using (\ref{eq:KL}), we get
\begin{align*}
\text{KL}\big(\PP_{\bL}^{\bx,T}\|\PP_{\bD}^{\bx,T}\big)
&= \frac14 \int_0^{T} \bfE\big[\|\nabla f(\bD_t)+\boldb_t(\bD)\|_2^2\big]\,dt\\
&= \frac14 \sum_{k=0}^{K-1}\int_{kh}^{(k+1)h} \bfE\big[\|\nabla f(\bD_t)-\nabla f(\bD_{kh})\|_2^2\big]\,dt.
\end{align*}
Since $\nabla f$ is Lipschitz continuous with Lipschitz constant $M$, we have
\begin{align*}
\text{KL}\big(\PP_{\bL}^{\bx,T}\|\PP_{\bD}^{\bx,T}\big)
&\le \frac{M^2}4 \sum_{k=0}^{K-1}\int_{kh}^{(k+1)h} \bfE\big[\|\bD_t-\bD_{kh}\|_2^2\big]\,dt.
\end{align*}
In view of (\ref{D-SDE}) we obtain
\begin{align}
\text{KL}\big(\PP_{\bL}^{\bx,T}\|\PP_{\bD}^{\bx,T}\big)
&\le \frac{M^2}4 \sum_{k=0}^{K-1}\int_{kh}^{(k+1)h}\Big( \bfE\big[\|\nabla f(\bD_{kh})\|_2^2(t-kh)^2\big]+2p(t-kh)\Big)\,dt\nonumber\\
&= \frac{M^2h^3}{12} \sum_{k=0}^{K-1}\bfE\big[\|\nabla f(\bvartheta^{(k,h)})\|_2^2\big]+\frac{pKM^2h^2}{4}.
\end{align}
Applying Corollary~\ref{cor:3}, the desired inequality follows.
\end{proof}

\begin{proof}[Proof of Theorem~\ref{th:2}]
In view of the triangle inequality, we have
\begin{align}\label{eq:6}
\|\nu\bP_{\bvartheta}^{K}-\bP_\pi\|_{\rm TV}
        & = \|\nu\bP_{\bD}^{Kh}-\bP_\pi\|_{\rm TV}
				 \le \|{\nu}\bP_{\bL}^{T}-\bP_\pi\|_{\rm TV}+ \|\nu\bP_{\bD}^{T}-\nu\bP_{\bL}^{T}\|_{\rm TV} .
\end{align}
The first term in the right-hand side is what we call first type error. The source of this error is
the finiteness of time, since it would be equal to zero if we could choose
$T=Kh=+\infty$. The second term in the right-hand side of (\ref{eq:6}) is the second type error, which
is caused by the practical impossibility to take the step-size $h$ equal to zero. These two errors can be evaluated as follows.

For the first type error, apply Lemma~\ref{lem:1} to get
$\|\nu\bP_{\bL}^{T}-\bP_\pi\|_{\rm TV}\le  \frac12\chi^2(\nu\|\pi)^{1/2} e^{-T m/2}$.
Since $\nu$ is a Gaussian distribution, the expectation in the above formula is not difficult to evaluate.
The corresponding result, provided by Lemma~\ref{lem:2}, yields
\begin{align}\label{eq:7}
\|{\nu}\bP_{\bL}^{T}-\bP_\pi\|_{\rm TV}&\le \frac12 \exp\bigg\{\frac{p}{4}\log \bigg(\frac{M}{m}\bigg)-\frac{Tm}{2}\bigg\}.
\end{align}
To evaluate the second type error, we use the Pinsker inequality:
\begin{align}\label{eq:8}
\|\nu\bP_{\bD}^{T}-\nu\bP_{\bL}^{T}\|_{\rm TV}&\le \|\nu\PP_{\bD}^{T}-\nu\PP_{\bL}^{T}\|_{\rm TV}
\le \Big\{\frac12 \text{KL}\big(\nu\PP_{\bL}^{T}\|\nu\PP_{\bD}^{T}\big)\Big\}^{1/2}.
\end{align}
Combining this inequality with (\ref{eq:KL2}), we get the desired result.
\end{proof}
\subsection{Proofs of results concerning the LMCO}

\begin{proof}[Proof of Theorem~\ref{th:3}]
Using the same arguments as those of the proof of Theorem~\ref{th:2}. This leads to the inequality
\begin{align}\label{pr:th3:1}
\big\|\nu\bP_{\bvartheta}^{K}-\bP_\pi\big\|_{\rm TV}
    &\le \frac12 \exp\bigg\{\frac{p}{4}\log \bigg(\frac{2M}{m}\bigg)-\frac{Tm}{2}\bigg\}
        +\bigg\{\frac12 \text{KL}\big(\nu\PP_{\bL}^{T}\|\nu\PP_{\bD^O}^{T}\big)\bigg\}^{1/2},
\end{align}
where $\PP_{\bD^O}^{T}$ is the probability distribution induced by the diffusion process $\bD^O$
corresponding to the Ozaki discretisation (in fact, it is a piecewise Ornstein-Uhlenbeck process). Relation (\ref{eq:KL})
implies that
\begin{align}\label{pr:th3:2}
\text{KL}\Big(\nu\PP_{\bL}^{T}\Big\|\nu\PP_{\bD^O}^{T}\Big) =
\frac14 \int_0^T \bfE\Big[\big\|\nabla f(\bD_t^O)+b_t(\bD^O)\big\|_2^2\Big]\,dt.
\end{align}
Since on each interval $[kh,(k+1)h[$ the function $t\mapsto b_t$ is linear, for every $t\in [kh,(k+1)h[$, we get
%\begin{align}\label{pr:th3:3}
$\big\|\nabla f(\bD_t^O)+b_t(\bD^O)\big\|_2^2
		= \big\|\nabla f(\bD_t^O)-\nabla f(\bD_{kh}^O)-\nabla^2 f(\bD_{kh}^O)(\bD^O_t-\bD_{kh}^O)\big\|_2^2$.
%\end{align}
Using the mean-value theorem and the Lipschitz continuity of the Hessian of $f$, we derive from the above relation that
\begin{align}\label{pr:th3:4}
\big\|\nabla f(\bD_t^O)+b_t(\bD^O)\big\|_2^2
		&\le \frac14L_f^2 \big\|\bD^O_t-\bD_{kh}^O\big\|_2^4,
\end{align}
for every $t\in [kh,(k+1)h[$. Note now that equation (\ref{update}) provides the conditional distribution of $\bD_{(k+1)h}^O$
given $\bD_{kh}^O$. An analogous formula holds for the conditional distribution of $\bD_{t}^O-\bD_{kh}^O$ given $\bD_{kh}^O$,
which is multivariate Gaussian with mean $\big(\bfI_p-e^{-(t-kh)\bfH_k}\big)\bfH_k^{-1}\nabla f\big(\bD_{kh}^O\big)$
and covariance matrix  $\bfSigma_k =\big(\bfI_p-e^{-2(t-hk)\bfH_k}\big)\bfH_k^{-1}$, where
$\bfH_k = \nabla^2 f(\bD_{kh}^O)$.  Under convexity condition on $f$, we have $\|\big(\bfI_p-e^{-s\bfH_k}\big)\bfH_k^{-1}\|\le s$
for every $s>0$.  Therefore, conditioning with respect to
$\bD_{kh}^O$ and using the inequality $(a+b)^4\le 8(a^4+b^4)$, for every $t\in[kh,(k+1)h[$ we get
\begin{align*}%\label{pr:th3:5}
\frac14\bfE\big[\big\|\bD^O_t-\bD_{kh}^O\big\|_2^4\,\big|\,\bD_{kh}^O\big]
		& \le  \big\|\big(\bfI_p-e^{-(t-kh)\bfH_k}\big)\bfH_k^{-1}\nabla f\big(\bD_{kh}^O\big)\big\|^4_2+
		 \bfE\Big[\big\|\bfSigma_k^{1/2}\bxi^{(k+1)}\big\|^4_2\,\Big|\,\bD_{kh}^O\Big]\\
		& \le  (t-hk)^4\big\|\nabla f\big(\bD_{kh}^O\big)\big\|^4_2+  (p+1)^2
		\big\|(\bfI_p-e^{-2(t-hk)\bfH_k})\bfH_k^{-1}\big\|^2\\
		& \le  (t-hk)^4\big\|\nabla f\big(\bD_{kh}^O\big)\big\|^4_2+ 4 (t-hk)^2 (p+1)^2.
\end{align*}
This inequality, in conjunction with (\ref{pr:th3:2}) and (\ref{pr:th3:4}) yields
\begin{align}\label{pr:th3:5}
\text{KL}\Big(\nu\PP_{\bL}^{T}\Big\|\nu\PP_{\bD^O}^{T}\Big)
		& \le \frac{L_f^2}{16} \sum_{k=0}^{K-1}\int_{kh}^{(k+1)h} \bfE\Big(\bfE\Big[\big\|\bD_t^O-\bD_{kh}^O\big\|_2^4
				\,\big|\bD_{kh}^O\Big]\Big)\,dt\nonumber\\
		& \le \frac{L_f^2 h^5}{20}\sum_{k=0}^{K-1} \bfE\big(\big\|\nabla f\big(\bD_{kh}^O\big)\big\|^4_2\big)+ \frac{1}{3} L^2_f Kh^3 (p+1)^2.
\end{align}
%Using the second inequality in (\ref{convex1}) and the fact that $\nabla f(\btheta^*)=0$ we can upper bound the
%term $\|\nabla f\big(\bD_{kh}^O\big)\big\|_2$ by $M\|\bD_{kh}^O-\btheta^*\|_2 = M\|\bar\bvartheta^{(k,h)}-\btheta^*\|_2$.
%The latter norm can be bounded from above using the next lemma, the proof of which is postponed.
%
%\begin{lemma}\label{lem:4}
%If $p\ge 2$ and $T = Kh\ge 8/(3M)$, then the iterates of the LMCO algorithm satisfy
%\begin{align*}
%\bfE\big[\|\bar\bvartheta^{(K,h)}-\btheta^*\|_2^4\big]\le 7\bigg(\frac{4TMp}{m}\bigg)^2.
%\end{align*}
%\end{lemma}

To bound the last expectation, we use the fact that $\bD_{kh}^O$ equals  $\bar\bvartheta^{(k,h)}$ in distribution, and
the next lemma (the proof of which is provided in the supplementary material).

\begin{lemma}\label{lem:4}
If $p\ge 2$, $T \ge 4/(3M)$ and $h\le 1/(8M)$, then the iterates of the LMCO algorithm satisfy
$\bfE\big[\big(\sum_{k=0}^{K-1}\|\nabla f(\bar\bvartheta^{(k,h)})\|_2^2\big)^2\big] \le \frac{32}{3}\big({TMp}/h\big)^2$.
\end{lemma}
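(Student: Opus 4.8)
The plan is to mimic the optimisation-style analysis behind Proposition~\ref{prop:1b} and Corollary~\ref{cor:3}, but to push it to the level of the \emph{second} moment of the random sum. Throughout, write $f^{(k)}=f(\bar\bvartheta^{(k,h)})$, $g_k = \|\nabla f(\bar\bvartheta^{(k,h)})\|_2^2$, $V_k = f^{(k)}-f^*$, $\bfH_k = \nabla^2 f(\bar\bvartheta^{(k,h)})$, and let $\mathcal F_k$ denote the $\sigma$-field generated by $\bar\bvartheta^{(0)},\bxi^{(1)},\dots,\bxi^{(k)}$. Since $f$ satisfies (\ref{convex1}) and is twice differentiable, the eigenvalues of $\bfH_k$ lie in $[m,M]$, and because $\bfM_k$ and $\bfSigma_k$ are matrix functions of $\bfH_k$ they are diagonalised in the same basis. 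First I would record the elementary spectral bounds obtained by applying the scalar inequalities $x-\tfrac12 x^2\le 1-e^{-x}\le x$ to each eigenvalue: with $h\le 1/(8M)$ one gets $\tfrac{15h}{16}\bfI_p\preceq\bfM_k\preceq h\bfI_p$, together with $\tr(\bfSigma_k)\le 2hp$, $\tr(\bfSigma_k^2)\le 4h^2p$ and $\|\bfI_p-M\bfM_k\|\le 1$.

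The core step is a one-step descent inequality with an explicit martingale remainder. Applying Lemma~\ref{lem:5} to the update (\ref{update}), with $\Delta_k := \bar\bvartheta^{(k+1,h)}-\bar\bvartheta^{(k,h)} = -\bfM_k\nabla f(\bar\bvartheta^{(k,h)})+\bfSigma_k^{1/2}\bxi^{(k+1)}$, I expand $f^{(k+1)}\le f^{(k)}+\nabla f(\bar\bvartheta^{(k,h)})^\top\Delta_k+\tfrac M2\|\Delta_k\|_2^2$ and separate the $\mathcal F_k$-conditional mean from the zero-mean fluctuation $\xi_k$. Using the spectral bounds, the conditional mean is at most $f^{(k)}-\tfrac{7h}{8}g_k+Mhp$, so that $V_{k+1}\le V_k-\tfrac{7h}{8}g_k+Mhp+\xi_k$ with $\bfE[\xi_k\mid\mathcal F_k]=0$. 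Writing $\xi_k=Z_k+W_k$ with $Z_k=((\bfI_p-M\bfM_k)\nabla f(\bar\bvartheta^{(k,h)}))^\top\bfSigma_k^{1/2}\bxi^{(k+1)}$ linear and $W_k=\tfrac M2(\|\bfSigma_k^{1/2}\bxi^{(k+1)}\|_2^2-\tr(\bfSigma_k))$ quadratic in the Gaussian increment, a parity argument kills the cross term and the Gaussian moment formulae give $\bfE[\xi_k^2\mid\mathcal F_k]=\bfE[Z_k^2\mid\mathcal F_k]+\bfE[W_k^2\mid\mathcal F_k]\le 2hg_k+2M^2h^2p$.

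Next I would telescope. Summing over $k=0,\dots,K-1$ and dropping $V_K\ge 0$ yields the pointwise bound $\tfrac{7h}{8}S\le V_0+KMhp+\Xi$, where $S=\sum_{k=0}^{K-1}g_k$ and $\Xi=\sum_{k=0}^{K-1}\xi_k$ is a martingale. Taking expectations (so $\Xi$ drops out) gives the first-moment control $\bfE[S]\le\tfrac{8}{7h}(\bfE[V_0]+TMp)$, using $Kh=T$. The left-hand side $\tfrac{7h}{8}S$ is nonnegative, so squaring the telescoped inequality is legitimate; taking expectations and using that $V_0$ is $\mathcal F_0$-measurable while $\Xi$ is a martingale makes the cross term $\bfE[(V_0+KMhp)\Xi]$ vanish, leaving $\bfE[(\tfrac{7h}{8}S)^2]\le\bfE[(V_0+KMhp)^2]+\bfE[\Xi^2]$. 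Here $\bfE[\Xi^2]=\sum_k\bfE[\xi_k^2]\le 2h\bfE[S]+2KM^2h^2p$ is fed by the first-moment bound just obtained, and the initial-value moments are computed from $\bar\bvartheta^{(0)}\sim\mathcal N_p(\btheta^*,M^{-1}\bfI_p)$ together with $V_0\le\tfrac M2\|\bar\bvartheta^{(0)}-\btheta^*\|_2^2$, giving $\bfE[V_0]\le p/2$ and $\bfE[V_0^2]\le p(p+2)/4$ from the moments $\bfE\|\bZ\|_2^2=p$, $\bfE\|\bZ\|_2^4=p(p+2)$ of a standard Gaussian $\bZ$.

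Finally I would collect the constants. Every term other than $(KMhp)^2=(TMp)^2$ is of lower order once one uses $h\le 1/(8M)$ (so $Mh\le 1/8$) and $T\ge 4/(3M)$ (so $TMp\ge\tfrac43 p\ge\tfrac83$, hence $p\le\tfrac34 TMp$ and $TMp\le\tfrac38(TMp)^2$); absorbing them into $(TMp)^2$ and multiplying through by $(8/(7h))^2=64/(49h^2)$ produces $\bfE[S^2]\le \tfrac{32}{3}(TMp/h)^2$, in fact with a comfortable margin. The delicate point is the second-moment control of the martingale part $\Xi$: its conditional variance is proportional to $g_k$ and hence to $S$ itself, so the argument is genuinely two-stage—one must first extract the $O(1/h)$ bound on $\bfE[S]$ and only then estimate $\bfE[\Xi^2]$—and the whole scheme hinges on the exact vanishing of the cross term, which is why the martingale decomposition (rather than a crude triangle-inequality split) is essential. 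The remaining work is bookkeeping to verify that the accumulated constants stay below $32/3$.
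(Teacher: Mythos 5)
Your proposal is correct and follows essentially the same route as the paper's own proof: the descent bound of Lemma~\ref{lem:5} applied to the update (\ref{update}), the spectral estimates $\frac{15h}{16}\bfI_p\preceq\bfM_k\preceq h\bfI_p$ and $\|\bfSigma_k\|\le 2h$, telescoping, squaring the resulting inequality, and the same two-stage argument in which the first-moment bound on $h\sum_k\bfE[\|\nabla f(\bar\bvartheta^{(k,h)})\|_2^2]$ (identical to the paper's, via $\bfE[V_0]\le p/2$) feeds the martingale second moment. The only deviation, which is harmless and in fact slightly sharpens the constants, is that you centre the quadratic Gaussian term $\frac{M}{2}\|\bfSigma_k^{1/2}\bxi^{(k+1)}\|_2^2$ and merge its fluctuation into the martingale so that the cross terms vanish exactly, whereas the paper bounds it by $hM\|\bxi^{(k+1)}\|_2^2$, keeps the uncentred $\chi^2_{Kp}$ sum (computing $\bfE[\eta_K^2]=(Kp)^2+2Kp$ and absorbing the $2Kp$ via the implied condition $Kp\ge 16$), and splits the square with a Cauchy--Schwarz-type inequality with factors $4,4,2$.
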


Combining this lemma and (\ref{pr:th3:5}), we upper bound the Kullback-Leibler divergence as follows
$\text{KL}\big(\nu\PP_{\bL}^{T}\big\|\nu\PP_{\bD^O}^{T}\big)
		 \le 0.534  h^3(L_f TMp)^2+ 0.75  T (L_f h p)^2$, which completes the proof.
\end{proof}

{\renewcommand{\addtocontents}[2]{}
\section*{Acknowledgments}
The work of the author was partially supported by the grant Investissements d'Avenir (ANR-11-IDEX-0003/Labex Ecodec/ANR-11-LABX-0047).
}

{\renewcommand{\addtocontents}[2]{}
\bibliography{Literature}}

\end{document}